\def\BibTeX{{\rm B\kern-.05em{\sc i\kern-.025em b}\kern-.08em
    T\kern-.1667em\lower.7ex\hbox{E}\kern-.125emX}}
\definecolor{light}{RGB}{222,235,247}
\definecolor{dark}{RGB}{158,202,225}
\definecolor{darker}{RGB}{49,130,189}
\definecolor{beaublue}{RGB}{189,212,230}
\definecolor{deepGreen}{RGB}{29, 149, 63}
\definecolor{lightgray}{RGB}{204, 204, 204}
\newcolumntype{a}{>{\columncolor{light}}r}
\newcolumntype{b}{>{\columncolor{dark}}r}
\definecolor{commentgray}{gray}{0.5}
\newcommand{\projtext}{CDD}
\newcommand{\proj}{\projtext\xspace}
\newcommand{\projfull}{\textsf{Counter-Based Delta Debugging}\xspace}
\newcommand{\codeurl}{\url{https://github.com/uw-pluverse/perses}}
\newcommand{\Rmnum}[1]{\expandafter\@slowromancap\romannumeral #1@}
\newcommand{\gain}{Expected Reduction Gain\xspace}
\newcommand{\gcc}{GCC\xspace}
\newcommand{\llvm}{LLVM\xspace}
\newcommand{\java}{Java\xspace}
\newcommand{\creduce}{C-Reduce\xspace}
\newcommand{\jreduce}{J-Reduce\xspace}
\newcommand{\perses}{Perses\xspace}
\newcommand{\vulcan}{Vulcan\xspace}
\newcommand{\probdd}{ProbDD\xspace}
\newcommand{\probddnor}{ProbDD-no-random\xspace}
\newcommand{\myalgorithm}{\proj}
\newcommand{\hdd}{HDD\xspace}
\newcommand{\deltadebugging}{Delta Debugging\xspace}
\newcommand{\chisel}{Chisel\xspace}
\newcommand{\ddmin}{\mycode{ddmin}}
\newcommand{\picireny}{Picireny\xspace}
\newcommand{\picire}{Picire\xspace}
\newcommand{\xml}{XML\xspace}
\newcommand{\basex}{Basex\xspace}
\newcommand{\xpress}{Xpress\xspace}
\newcommand{\searchspace}{\ensuremath{\mathbb{L}}\xspace}
\newcommand{\boolspace}{\ensuremath{\mathbb{B}}\xspace}
\newcommand{\pincreaserate}{\ensuremath{\frac{1}{1 - e^{-1}}}\xspace}
\newcommand{\pincreaseratevalue}{1.582\xspace}
\newcommand{\cBenchmarkNameShort}{\ensuremath{\text{BM}_\text{C}}\xspace}
\newcommand{\cBenchmarkNumValue}{20}
\newcommand{\cBenchmarkNum}{\cBenchmarkNumValue\xspace}
\newcommand{\cBenchmarkSmallestSize}{4,397\xspace}
\newcommand{\cBenchmarkLargestSize}{212,259\xspace}
\newcommand{\debloatBenchmarkNameShort}{\ensuremath{\text{BM}_\text{DBT}}\xspace}
\newcommand{\debloatBenchmarkNumValue}{10}
\newcommand{\debloatBenchmarkNum}{\debloatBenchmarkNumValue\xspace}
\newcommand{\debloatBenchmarkSmallestSize}{34,801\xspace}
\newcommand{\debloatBenchmarkLargestSize}{163,296\xspace}
\newcommand{\xmlBenchmarkNameShort}{\ensuremath{\text{BM}_\text{XML}}\xspace}
\newcommand{\xmlBenchmarkNumValue}{46}
\newcommand{\xmlBenchmarkNum}{\xmlBenchmarkNumValue\xspace}
\newcommand{\xmlBenchmarkUniqueNumValue}{8}
\newcommand{\xmlBenchmarkUniqueNum}{\xmlBenchmarkUniqueNumValue\xspace}
\newcommand{\xmlBenchmarkSmallestSize}{19,290\xspace}
\newcommand{\xmlBenchmarkLargestSize}{20,750\xspace}
\newcommand{\benchmarkNumTotal}{%
  \pgfmathparse{\cBenchmarkNumValue + \debloatBenchmarkNumValue + \xmlBenchmarkNumValue}
    \pgfmathprintnumber{\pgfmathresult}\xspace%
}
\newcommand{\allBenchmarkSizeAvgProbdd}{235\xspace}
\newcommand{\allBenchmarkSizeProbddVsDdminPvalue}{0.32\xspace}
\newcommand{\allBenchmarkSizeProbddVsProjPvalue}{0.42\xspace}
\newcommand{\allBenchmarkTimeAvgDdminValue}{2999}
\newcommand{\allBenchmarkTimeAvgProbddValue}{2189}
\newcommand{\allBenchmarkTimeAvgProjValue}{2102}
\newcommand{\allBenchmarkTimeAvgProbdd}{2,189\xspace}
\newcommand{\allBenchmarkNoTimeoutTimeDecRateProbddVsDdmin}{%
  \pgfmathparse{(1 - \allBenchmarkTimeAvgProbddValue/\allBenchmarkTimeAvgDdminValue)*100}
    \pgfmathprintnumber{\pgfmathresult}\%\xspace%
}
\newcommand{\allBenchmarkNoTimeoutTimeDecRateProjVsDdmin}{%
  \pgfmathparse{(1 - \allBenchmarkTimeAvgProjValue/\allBenchmarkTimeAvgDdminValue)*100}
    \pgfmathprintnumber{\pgfmathresult}\%\xspace%
}
\newcommand{\allBenchmarkTimeProbddVsDdminPvalue}{3e-09\xspace}
\newcommand{\allBenchmarkTimeProbddVsProjPvalue}{0.29\xspace}
\newcommand{\allBenchmarkQueryAvgDdminValue}{2752}
\newcommand{\allBenchmarkQueryAvgProbddValue}{1309}
\newcommand{\allBenchmarkQueryAvgProjValue}{1320}
\newcommand{\allBenchmarkQueryAvgProbdd}{1,309\xspace}
\newcommand{\allBenchmarkNoTimeoutQueryDecRateProbddVsDdmin}{%
  \pgfmathparse{(1 - \allBenchmarkQueryAvgProbddValue/\allBenchmarkQueryAvgDdminValue)*100}
    \pgfmathprintnumber{\pgfmathresult}\%\xspace%
}
\newcommand{\allBenchmarkNoTimeoutQueryDecRateProjVsDdmin}{%
  \pgfmathparse{(1 - \allBenchmarkQueryAvgProjValue/\allBenchmarkQueryAvgDdminValue)*100}
    \pgfmathprintnumber{\pgfmathresult}\%\xspace%
}
\newcommand{\allBenchmarkQueryProbddVsDdminPvalue}{9e-14\xspace}
\newcommand{\allBenchmarkQueryProbddVsProjPvalue}{0.70\xspace}
\newcommand{\typeComplement}{\textit{Complement}\xspace}
\newcommand{\typeRepeated}{\textit{Revisit}\xspace}
\newcommand{\typeOther}{\textit{Other}\xspace}
\newcommand{\cTotalNumberDdmin}{901,128\xspace}
\newcommand{\cComplementNumberDdmin}{490,896\xspace}
\newcommand{\cComplementPercentageDdmin}{54.48\%\xspace}
\newcommand{\cComplementNumberSuccessDdmin}{9\xspace}
\newcommand{\cComplementSuccessRateDdmin}{<0.01\%\xspace}
\newcommand{\cRepeatedNumberDdmin}{170,884\xspace}
\newcommand{\cRepeatedPercentageDdmin}{18.96\%\xspace}
\newcommand{\cRepeatedNumberSuccessDdmin}{222\xspace}
\newcommand{\cRepeatedSuccessRateDdmin}{0.13\%\xspace}
\newcommand{\cOtherNumberSuccessDdmin}{11,927\xspace}
\newcommand{\cOtherSuccessRateDdmin}{4.98\%\xspace}
\newcommand{\cSuccessNumberDdmin}{12,158\xspace}
\newcommand{\cSuccessNumberProbdd}{11,696\xspace}
\newcommand{\cSuccessRateProbdd}{4.28\%\xspace}
\newcommand{\debloatRepeatedNumberSuccessDdmin}{1,048\xspace}
\newcommand{\debloatComplementNumberSuccessDdmin}{855\xspace}
\newcommand{\debloatOtherNumberSuccessDdmin}{8,437\xspace}
\newcommand{\debloatSuccessNumberDdmin}{10,340\xspace}
\newcommand{\debloatSuccessNumberProbdd}{13,878\xspace}
\newcommand{\xmlRepeatedNumberSuccessDdmin}{2\xspace}
\newcommand{\xmlComplementNumberSuccessDdmin}{830\xspace}
\newcommand{\xmlOtherNumberSuccessDdmin}{1,485\xspace}
\newcommand{\xmlSuccessNumberDdmin}{2,317\xspace}
\newcommand{\xmlSuccessNumberProbdd}{2,176\xspace}
\newcommand{\allBenchmarkSizeAvgProbddNoshuffle}{236\xspace}
\newcommand{\allBenchmarkSizeProbddVsProbddNoshufflePvalue}{0.87\xspace}
\newcommand{\allBenchmarkTimeAvgProbddNoshuffle}{2,069\xspace}
\newcommand{\allBenchmarkTimeProbddVsProbddNoshufflePvalue}{0.15\xspace}
\newcommand{\allBenchmarkQueryAvgProbddNoshuffle}{1,238\xspace}
\newcommand{\allBenchmarkQueryProbddVsProbddNoshufflePvalue}{0.10\xspace}
\newcommand{\cBenchmarkProbddCalledNumValue}{6871}
\newcommand{\cBenchmarkProbddCalledNum}{6,871\xspace}
\newcommand{\cBenchmarkProbddNotoneminimalNumValue}{76}
\newcommand{\cBenchmarkProbddNotoneminimalNum}{\cBenchmarkProbddNotoneminimalNumValue\xspace}
\newcommand{\cBenchmarkProbddNotoneminimalRate}{%
  \pgfmathparse{(\cBenchmarkProbddNotoneminimalNumValue/\cBenchmarkProbddCalledNumValue)*100}
    \pgfmathprintnumber{\pgfmathresult}\%\xspace%
}
\newcommand{\cBenchmarkProbddNotoneminimalAvgPotential}{1.49\xspace}
\newcommand{\aka}{\hbox{\emph{a.k.a.}}\xspace}
\newcommand{\etal}{\hbox{\emph{et al.}}\xspace}
\newcommand{\vs}{\hbox{\emph{vs.}}\xspace}
\newcommand{\eg}{\hbox{\emph{e.g.}}\xspace}
\newcommand{\ie}{\hbox{\emph{i.e.}}\xspace}
\newcommand{\wrt}{\hbox{\emph{w.r.t.}}\xspace}
\newcommand{\etc}{\hbox{\emph{etc.}}\xspace}
\scriptsize\color{lightgray}\bfseries,
\newcommand{\mycode}[1]{\texttt{#1}\xspace}
\newcommand{\myelement}[1]{\ensuremath{{l}_{#1}}}
\newcommand{\mysubset}{\ensuremath{S}\xspace}
\newcommand{\myprobability}[2]{\ensuremath{p_{#1}^{#2}}}
\newcommand{\mysize}[2]{\ensuremath{s_{#1}^{#2}}}
\newcommand{\HighlightBlack}[1]{}
\newcommand{\myparagraph}[1]{
  \noindent \textit{\textbf{#1.}}\quad
}
\newcommand{\RoundNumber}{\ensuremath{r}\xspace}
\newtheorem{thm}{Theorem}[section]
\newtheorem{lem}[thm]{Lemma}
\newcommand\ignore[1]{}
\newcommand{\program}{\ensuremath{L}\xspace}
\newcommand{\listinput}{\program}
\newcommand{\testinput}{\ensuremath{I}\xspace}
\newcommand{\pinit}{\ensuremath{\myprobability{0}{}}\xspace}
\newcommand{\property}{\ensuremath{\psi}\xspace}
\newcommand{\variant}[1]{v$_{#1}$}
\newcommand*\circled[1]{\tikz[baseline=(char.base)]{
		\node[shape=circle,draw,inner sep=1pt] (char) {#1};}}
\newcommand{\highlightGray}[2]{%
	\draw[lightgray,line width=8pt,opacity=0.7]%
	([yshift=2pt]#1) -- ([yshift=2pt]#2);%
}
\newcommand{\verticalline}{\unskip\ \vrule\ \ \ }
\newcounter{findingnum}
\newcounter{rqnum}
\newcommand{\finding}[1]{
  \begin{tcolorbox}[boxrule=0.5pt,arc=4pt,
        left=2pt,right=2pt,top=2pt,bottom=2pt,boxsep=2pt,
        colback=beaublue
        ]
    \textbf{Finding~\refstepcounter{findingnum}\thefindingnum}: #1
  \end{tcolorbox}
}
\def\firstchar#1#2\relax{#1}
\Crefname{algocf}{Algorithm}{Algorithms}
\crefname{algocf}{Algorithm}{Algorithms}
\Crefname{algorithm}{Algorithm}{Algorithms}
\crefname{algorithm}{Algorithm}{Algorithms}
\crefname{appendix}{Appendix}{Appendices}
\Crefname{appendix}{Appendix}{Appendices}
\Crefname{figure}{Fig.}{Figs.}
\crefname{figure}{Fig.}{Figs.}
\crefname{listing}{Listing}{Listings}
\Crefname{listing}{Listing}{Listings}
\Crefname{table}{Table}{Tables}
\crefname{table}{Table}{Tables}
\crefname{thm}{Lemma}{Lemmas}
\Crefname{thm}{Lemma}{Lemmas}
\crefname{equation}{Equation}{Equations}
\Crefname{equation}{Equation}{Equations}
\begin{document}
\title{Toward a Better Understanding of Probabilistic Delta Debugging
}

\makeatletter
\newcommand{\linebreakand}{%
\end{@IEEEauthorhalign}
\hfill\mbox{}\par
\mbox{}\hfill\begin{@IEEEauthorhalign}
}
\makeatother

\author{
	\IEEEauthorblockN{
		Mengxiao Zhang\IEEEauthorrefmark{1},
		Zhenyang Xu\IEEEauthorrefmark{1},
		Yongqiang Tian\IEEEauthorrefmark{2},
		Xinru Cheng\IEEEauthorrefmark{1},
		and Chengnian Sun\IEEEauthorrefmark{1}
	}
	\IEEEauthorblockA{
		\IEEEauthorrefmark{1}School of Computer Science, University of
		Waterloo, Waterloo, Canada\\
		Emails: m492zhan@uwaterloo.ca, zhenyang.xu@uwaterloo.ca,
		x59cheng@uwaterloo.ca, cnsun@uwaterloo.ca
	}
	\IEEEauthorblockA{
		\IEEEauthorrefmark{2}Department of Computer Science and
		Engineering,\\
		The Hong Kong University of Science and Technology, Hong Kong,
		China\\
		Email: yqtian@ust.hk
	}
}

\maketitle


\begin{abstract}
	Given a list \program of elements and a property \property that \program exhibits,
\ddmin is a classic test input minimization algorithm
that aims to automatically remove \property-irrelevant elements from
\program.
This algorithm has been widely adopted in domains such as test input
minimization and
software debloating.
%
Recently, \probdd, a variant of \ddmin,
has been proposed and achieved state-of-the-art performance.
By
employing Bayesian optimization, \probdd estimates the probability of each
element in \program
being relevant to \property, and statistically
decides which and how many elements should be deleted together each time.
However, the theoretical probabilistic model of \probdd is rather intricate,
and the underlying details for the superior performance of \probdd
have not been adequately explored.

In this paper, we conduct the first in-depth theoretical analysis of \probdd,
clarifying the trends in probability and subset size changes and simplifying
the probability model.
We complement this analysis with empirical
experiments, including success rate analysis, ablation studies, and
examinations of trade-offs and limitations, to further comprehend and
demystify this state-of-the-art algorithm. Our success rate analysis reveals
how \probdd effectively addresses bottlenecks that slow down \ddmin by
skipping inefficient queries that attempt to delete complements of subsets
and previously tried subsets. The ablation study illustrates that
randomness in \probdd has no significant impact on efficiency. These
findings provide valuable insights for future research and applications of
test input minimization algorithms.

Based on the findings above, we propose \proj, a simplified version of
\probdd, reducing the complexity in both theory and implementation. \proj
assists in \circled{1} validating the correctness of our key findings, \eg, that
probabilities in \probdd essentially serve as monotonically increasing
counters for each element, and \circled{2} identifying the main factors that
truly contribute to \probdd's superior performance. Our comprehensive
evaluations across \benchmarkNumTotal benchmarks in test input
minimization and software debloating demonstrate that \proj can achieve
the same performance as \probdd, despite being much simplified.
\end{abstract}

\begin{IEEEkeywords}
	Program Reduction, Delta Debugging, Software Debloating,
	Test Input Minimization
\end{IEEEkeywords}

\section{Introduction}
\label{sec:intro}
\deltadebugging~\cite{zeller2002simplifying} is a seminal family
of
algorithms designed for software debugging, among which \ddmin stands
out as a classic test input minimization (\aka, test input
reduction) algorithm.
Given a list \program of elements (modeling the test input)
and a property \property that \program exhibits,
\ddmin aims to remove elements in \program
that are irrelevant to \property,
such that
the resulting list
is smaller than \program yet still satisfies \property.
The
\ddmin algorithm
plays a crucial role in software testing, debugging and
maintenance~\cite{gccReductionGuide, llvmReductionGuide,
	webkit, asfbugzilla, mozillabugzilla},
since compact and informative bug-triggering inputs
are easier for developers to effectively identify root causes
than large bug-triggering inputs with bug-irrelevant
information~\cite{sigplan,ccmd,kitten,sun2016finding}.

To minimize a test input \testinput that satisfies \property,
\ddmin has been used in two primary manners.
In the first manner,
\testinput is initially segmented into a list, denoted as
\listinput, which could be segmented based on characters, tokens, lines, \etc Subsequently,
\ddmin is directly applied to
\listinput~\cite{zeller2002simplifying,donaldson2021test}.
Alternatively,
\ddmin serves as a pivotal component within advanced,
structure-aware test input minimization algorithms,
including \perses~\cite{sun2018perses},
\hdd~\cite{misherghi2006hdd},
\creduce~\cite{regehr2012test}, and
\chisel~\cite{heo2018effective}.
These algorithms leverage the inherent structures of
\testinput to expedite the minimization process or further reduce its size.
Generally, these algorithms initiate by parsing
\testinput into a tree structure, such as a parse tree.
They then iteratively extract a list \listinput of tree nodes
from the tree using heuristics and apply
\ddmin to \listinput to gradually condense the tree.
Both manners underscore the fundamental role of
\ddmin as the cornerstone of test input minimization.

In the past years,
different variants of
\ddmin have been proposed
to improve its
performance
~\cite{heo2018effective,wang2021probabilistic,gharachorlu2018avoiding,
 hodovan2016practical, zhou2024wdd},
among which Probabilistic Delta Debugging
(\probdd)~\cite{wang2021probabilistic} is the state of the art, with notable
superiority to other
algorithms~\cite{zeller2002simplifying, heo2018effective}.
When reducing \listinput,
\probdd utilizes a theoretical probabilistic model
based on
Bayesian
optimization to predict
 how likely every element in \program is essential to
 preserve
the property \property, by
assigning a probability to each element.
%
%
\probdd prioritizes
deleting
elements with lower
probabilities, as such elements generally have a lower possibility of being
\property-relevant. Before each deletion
attempt, an optimal subset of elements is
determined
by maximizing the \gain.\footnote{In each attempt, the
\gain
 is defined as
 the expected number
 of elements removed.
 Higher \gain is preferred, as it indicates an expectation to delete more
 elements through this attempt.}
If the deletion of this subset fails to preserve \property,
the probabilistic model increases the probability
assigned to each element in the subset.
As reported~\cite{probdd}, aided by such a probabilistic model,
\probdd  significantly outperforms \ddmin by
reducing the execution time and the query number.\footnote{A query is
a run of the property
test \property.}

However, this probabilistic model in
\probdd is rather intricate,
and the underlying mechanisms for its superior performance
have not been adequately studied.
The original paper of
\probdd merely
showed its performance numbers without deep
ablation analysis on such achievements.
Specifically, the following questions are important
to the research field of test input minimization,
but have not been answered yet.

\begin{enumerate}[topsep=0pt, leftmargin=*]
	\item What role do probabilities play in
	\probdd, and can they be simplified without impacting
	performance?
	\item What specific bottlenecks does \probdd overcome to achieve
	improvement compared to \ddmin?
    	\item How does randomness in \probdd contribute to
    	the performance improvement?
	\item What are the potential limitations of \probdd?
\end{enumerate}
Gaining a deeper understanding of the state of the art, \ie, \probdd, is
highly valuable for test input minimization tasks.
By clarifying
the intrinsic reasons behind its superiority, we can facilitate researchers to
understand
the essence of the probabilistic model, as well as its strengths and
limitations. Such demystification, in our view, paves the way for
enlightening future research and guides users to more
effectively apply
\ddmin and its variants for test input minimization.

To this end, we conduct the first in-depth analysis of \probdd, starting by
theoretically simplifying its probabilistic model. In the original \probdd,
probabilities are used to calculate the \gain,
which is subsequently used to determine the next subset size.
However, this process necessitates iterative calculations, impeding the
simplification and
comprehension of \probdd.
In our study,
we initially establish the analytical
correlation
between the probability and subset size, allowing for probabilities and
subset sizes
to be explicitly calculated through formulas, thus eliminating the need for
iterative updates.
Further, through mathematical derivation,
we
discover that the probability and subset size can be considered nearly
independent, each varying at an approximate ratio on their own. By
theoretical prediction, the probability increases
approximately by a factor of
\pincreaserate ($\approx$ \pincreaseratevalue), and the subset size
can be deduced by this probability, thus providing the potential for
simplifying
\probdd.

Building upon our theoretical analysis, we conducted extensive evaluations
of \ddmin, \probdd, and \proj across \benchmarkNumTotal diverse
benchmarks. The experimental results confirm
the correctness of our theoretical analysis, demonstrating how \probdd
addresses bottlenecks in \ddmin by skipping inefficient queries, reveals the
impact of randomness on results, and highlights the limitations of \probdd.
These findings provide valuable guidance for future research and the
development of test input minimization algorithms.

Based on the aforementioned analysis, we propose \projfull (\proj), a
simplified version of \probdd, to explain
\probdd's
high performance. By replacing probabilities with counters, \proj eliminates
the probability computations required by \probdd, thus reducing theoretical
and implementation complexity. Our experiments demonstrate that \proj
aligns with \probdd in both effectiveness and efficiency, which validates
our previous analysis and findings.

\myparagraph{Key Findings} Through both theoretical analysis
and empirical
experiments, our key findings are:

\begin{enumerate}[topsep=0pt, leftmargin=*]

	\item
	Through theoretical derivation, the probabilities in
	\probdd essentially serve as
	monotonically increasing counters, and can be simplified.
    This suggests that the probability mechanism itself may
    not be a critical factor in \probdd's superior performance.

	\item
	The performance bottlenecks addressed by \probdd are inefficient
	deletion attempts on complements of subsets and
	previously tried subsets, which should be considered to
	enhance efficiency.

	\item
	Randomness in
	\probdd has no significant
	impact on the performance.
	Test input minimization is an NP-complete problem,
	and randomness in \probdd does not produce smaller results.

	\item
	\probdd is faster than \ddmin, but at the cost of not guaranteeing
	1-minimality.\footnote{A list is considered to have 1-minimality if
	removing any single element from it results in the loss of
	its property.}
	The trade-off between effectiveness and
	efficiency is inevitable, and should be
	leveraged accordingly in different scenarios.


\end{enumerate}

\myparagraph{Contributions} We make the following major
contributions.

\begin{itemize}[leftmargin=*]
\item We perform the first in-depth theoretical analysis for \probdd, the
state-of-the-art algorithm in test input minimization tasks,
and identify
the latent correlation between the subset size and the
probability of elements.


\item
We propose \proj, a much
simplified version of \probdd.

\item
We evaluate \ddmin, \probdd and \proj on
\benchmarkNumTotal
benchmarks, validating the
correctness of our theoretical analysis.
Additional experiments and statistical analysis on \probdd
further explain its superior
performance, reveal the effectiveness of randomness,
and demonstrate the limitations of \probdd.

\item
To enable future research on test input minimization, we release the
artifact publicly for replication\cite{zhang2024cddartifact}. Additionally, we
have integrated \proj into the \perses project, available at
\codeurl.

\end{itemize}

\myparagraph{Paper Organization}
The remainder of the paper is structured as follows: \cref{sec:preliminaries}
introduces the symbols used in this study
and detailed workflow of \ddmin and  \probdd.
\cref{sec:finding_probability_and_size} presents our
in-depth analysis on
\probdd, simplifying the model of probability and subset size.
\cref{sec:experimental_results}
describes empirical experiments and results, from which additional
findings are derived.
\cref{sec:implication} introduces \proj, which simplifies \probdd based on
our earlier findings while maintaining equivalent performance.
\cref{sec:related_work} illustrates related work
and \cref{sec:conclusion}  concludes this study.


\section{Preliminaries}
\label{sec:preliminaries}
To facilitate comprehension,
\cref{tab:symbols} lists all the important symbols used in this paper.
Next, this section introduces \ddmin and \probdd,
with the running example shown in \cref{fig:running_example}.

\begin{table}[htbp]
	\footnotesize
\caption{The symbols used in this paper.
}
\label{tab:symbols}
\begin{tabular}{@{}ll|ll@{}}
	\toprule
	Symbol            & Description                      & Symbol                         &
	Description                                                                                 \\ \midrule
	\listinput        & the list to minimize             & \mysize{}{}                    & the
	size of \mysubset                                                                       \\
	\property         & the property to preserve         & $E(s)$                         &
	\makecell[tl]{\gain \\ with the first \mysize{}{}
	elements}                                 \\
	\myelement{i}     & the $i$-th element of \listinput & $e$
	& Euler's number                                                                              \\
	$\myelement{i}.p$ & the probability of \myelement{i} &
	\RoundNumber                   & the round
	number                                                                            \\
	$v_{i}$           & a variant of \listinput          &
	\mysize{\RoundNumber}{}        & the subset size in round
	\RoundNumber                                                       \\
	\mysubset         & a subset of \listinput           &
	\myprobability{\RoundNumber}{} & \multicolumn{1}{c}{\makecell[tl]{the
	probability of each \\ element in round \RoundNumber}} \\ \bottomrule
\end{tabular}
\end{table}
\lstset{
	language=Python,
	basicstyle=\ttfamily\tiny,
	keywordstyle=\color{black},
	stringstyle=\color{black},
	commentstyle=\color{black},
	breakatwhitespace=false,
	breaklines=true,
	captionpos=b,
	keepspaces=true,
	showspaces=false,
	showstringspaces=false,
	showtabs=false,
	tabsize=2,
    escapeinside={@@},
    mathescape,
}

\begin{figure}[h]
	\centering
\begin{subfigure}[b]{0.28\linewidth}
\begin{lstlisting}
@\myelement{1}@:import math, sys
@\myelement{2}@:input = sys.argv[1]
@\myelement{3}@:a = int(input)
@\myelement{4}@:b = math.e
@\myelement{5}@:c = 3
@\myelement{6}@:d = pow(b, a)
@\myelement{7}@:c = math.log(d, b)
@\myelement{8}@:crash(c)
\end{lstlisting}
	\caption{Original.}
	\label{subfig:running_example:original}
\end{subfigure}
\hfil
\verticalline
\begin{subfigure}[b]{0.28\linewidth}
\begin{lstlisting}
@\myelement{1}@:import math, sys
@\myelement{2}@:input = sys.argv[1]
@\myelement{3}@:a = int(input)
@\myelement{4}@:b = math.e
@\myelement{5}@:@\mt{1s}@c = 3@\mt{1e}@
@\myelement{6}@:d = pow(b, a)
@\myelement{7}@:c = math.log(d, b)
@\myelement{8}@:crash(c)
\end{lstlisting}
	\caption{By \ddmin.}
	\label{subfig:running_example:ddmin}
\end{subfigure}
\hfil
\verticalline
\begin{subfigure}[b]{0.28\linewidth}
\begin{lstlisting}
@\myelement{1}@:@\mt{2s}@import math, sys@\mt{2e}@
@\myelement{2}@:@\mt{3s}@input = sys.argv[1]@\mt{3e}@
@\myelement{3}@:@\mt{4s}@a = int(input)@\mt{4e}@
@\myelement{4}@:@\mt{5s}@b = math.e@\mt{5e}@
@\myelement{5}@:c = 3
@\myelement{6}@:@\mt{6s}@d = pow(b, a)@\mt{6e}@
@\myelement{7}@:@\mt{7s}@c = math.log(d, b)@\mt{7e}@
@\myelement{8}@:crash(c)
\end{lstlisting}
	\caption{By \probdd.}
	\label{subfig:running_example:probdd}
\end{subfigure}
	\begin{tikzpicture}[remember picture, overlay]
		\highlightGray{1s}{1e}
		\highlightGray{2s}{2e}
		\highlightGray{3s}{3e}
		\highlightGray{4s}{4e}
		\highlightGray{5s}{5e}
		\highlightGray{6s}{6e}
		\highlightGray{7s}{7e}
	\end{tikzpicture}
	\caption{A running example in Python.
        \cref{subfig:running_example:original} shows the original
        program, represented as a list of 8 elements (\myelement{1},
        \myelement{2}, $\cdots$, \myelement{8}), in which \myelement{8} (\ie, \mycode{crash(c)})
        triggers the crash.
        \cref{subfig:running_example:ddmin} and \cref{subfig:running_example:probdd} show the minimized
        results by \ddmin and \probdd, with removed elements masked in
        \colorbox{lightgray}{gray}. Both minimized programs still trigger the
        crash. Note that \probdd cannot consistently guarantee the result in
        \cref{subfig:running_example:probdd} and might produce larger results,
        due to its inherent randomness.
    }
	\label{fig:running_example}
\end{figure}

\begin{table*}[ht]
    \centering
        \caption{Step-by-step outcomes from \ddmin on the running
        example.
        In each column, a variant is generated and tested against
        the property \property. These variants are sequentially generated from
        left to right.  The first row displays
        the variant identifier, and the second row displays
        round number \RoundNumber and subset size \mysize{}{}. In the
        following rows,
        the symbol ``\checkmark'' denotes an element is included by a
        certain variant,
        while \colorbox{lightgray}{gray} cells signify
        that the element have been removed.
        For the last row, \protect\AlgTrueLiteral indicates that the variant still
        preserves the property \property, whereas
        \protect\AlgFalseLiteral indicates not.
    }
        \label{table:running_example_results_ddmin}
            \resizebox{\textwidth}{!}{%
\begin{tabular}{cc|cc|cccccccc|cccccccccccccccccccc}
	\toprule
	\multicolumn{1}{c|}{Initial} & \multicolumn{1}{c|}{Variants}  &
	\variant{1}          & \variant{2}           & \multicolumn{1}{c}{\variant{3}}      &
	\multicolumn{1}{c}{\variant{4}}      & \multicolumn{1}{c}{\variant{5}}      &
	\multicolumn{1}{c}{\variant{6}}      & \variant{7}          & \variant{8}          &
	\variant{9}          & \variant{10}          & \multicolumn{1}{c}{\variant{11}}     &
	\multicolumn{1}{c}{\variant{12}}     & \multicolumn{1}{c}{\variant{13}}     &
	\multicolumn{1}{c}{\variant{14}}     & \multicolumn{1}{c}{\variant{15}}     &
	\multicolumn{1}{c}{\variant{16}}     & \multicolumn{1}{c}{\variant{17}}     &
	\multicolumn{1}{c}{\variant{18}}     & \variant{19}         & \variant{20}
	& \variant{21}         & \variant{22}         & \variant{23}                                 &
	\variant{24}                                 & \variant{25}                                 &
	\variant{26}                                 & \variant{27}                                 &
	\variant{28}                                 & \variant{29}                                 &
	\variant{30}                                 \\ \cmidrule(l){2-32}
	\multicolumn{1}{c|}{Element} & \multicolumn{1}{c|}{Round} &
	\multicolumn{2}{c|}{$r=1$ (\mysize{}{}=4)}                     &
	\multicolumn{8}{c|}{$r=2$ (\mysize{}{}=2)}

	                  &

	\multicolumn{20}{c}{$r=3$ (\mysize{}{}=1)}

	                                          \\
	 \midrule
	\multicolumn{1}{c|}{\myelement{1}}      &                                & \checkmark           &
	\multicolumn{1}{c|}{} & \multicolumn{1}{c}{\checkmark}
	&                                      &                                      &                                      &
	\multicolumn{1}{c}{} & \checkmark           & \checkmark           &
	\checkmark            & \multicolumn{1}{c}{\checkmark}
	&                                      &                                      &
	&                                      &                                      &
	&                                      & \multicolumn{1}{c}{} & \checkmark           &
	\checkmark           & \checkmark           & \checkmark
	& \multicolumn{1}{c}{}                         & \checkmark                                   &
	\checkmark                                   & \checkmark                                   &
	\checkmark                                   & \checkmark                                   &
	\checkmark                                   \\
	\multicolumn{1}{c|}{\myelement{2}}      &                                & \checkmark           &
	\multicolumn{1}{c|}{} & \multicolumn{1}{c}{\checkmark}
	&                                      &                                      &                                      &
	\multicolumn{1}{c}{} & \checkmark           & \checkmark           &
	\checkmark            &                                      &
	\multicolumn{1}{c}{\checkmark}       &
	&                                      &                                      &
	&                                      &                                      & \checkmark           &
	\multicolumn{1}{c}{} & \checkmark           & \checkmark           &
	\checkmark                                   & \checkmark                                   &
	\multicolumn{1}{c}{}                         & \checkmark                                   &
	\checkmark                                   & \checkmark                                   &
	\checkmark                                   & \checkmark                                   \\
	\multicolumn{1}{c|}{\myelement{3}}      &                                & \checkmark           &
	\multicolumn{1}{c|}{} &                                      &
	\multicolumn{1}{c}{\checkmark}       &
	&                                      & \checkmark           & \multicolumn{1}{c}{} &
	\checkmark           & \checkmark            &
	&                                      & \multicolumn{1}{c}{\checkmark}
	&                                      &                                      &
	&                                      &                                      & \checkmark           &
	\checkmark           & \multicolumn{1}{c}{} & \checkmark           &
	\checkmark                                   & \checkmark                                   &
	\checkmark                                   & \multicolumn{1}{c}{}                         &
	\checkmark                                   & \checkmark                                   &
	\checkmark                                   & \checkmark                                   \\
	\multicolumn{1}{c|}{\myelement{4}}      &                                & \checkmark           &
	\multicolumn{1}{c|}{} &                                      &
	\multicolumn{1}{c}{\checkmark}       &
	&                                      & \checkmark           & \multicolumn{1}{c}{} &
	\checkmark           & \checkmark            &
	&                                      &                                      &
	\multicolumn{1}{c}{\checkmark}       &
	&                                      &                                      &                                      &
	\checkmark           & \checkmark           & \checkmark           &
	\multicolumn{1}{c}{} & \checkmark                                   &
	\checkmark                                   & \checkmark                                   &
	\checkmark                                   & \multicolumn{1}{c}{}                         &
	\checkmark                                   & \checkmark                                   &
	\checkmark                                   \\
	\multicolumn{1}{c|}{\myelement{5}}      &                                & \multicolumn{1}{c}{} &
	\checkmark            &                                      &                                      &
	\multicolumn{1}{c}{\checkmark}       &                                      &
	\checkmark           & \checkmark           & \multicolumn{1}{c}{} &
	\checkmark            &                                      &
	&                                      &                                      &
	\multicolumn{1}{c}{\checkmark}       &
	&                                      &                                      & \checkmark           &
	\checkmark           & \checkmark           & \checkmark           &
	\multicolumn{1}{c}{\cellcolor[HTML]{CCCCCC}} &
	\multicolumn{1}{c}{\cellcolor[HTML]{CCCCCC}} &
	\multicolumn{1}{c}{\cellcolor[HTML]{CCCCCC}} &
	\multicolumn{1}{c}{\cellcolor[HTML]{CCCCCC}} &
	\multicolumn{1}{c}{\cellcolor[HTML]{CCCCCC}} &
	\multicolumn{1}{c}{\cellcolor[HTML]{CCCCCC}} &
	\multicolumn{1}{c}{\cellcolor[HTML]{CCCCCC}} &
	\multicolumn{1}{c}{\cellcolor[HTML]{CCCCCC}} \\
	\multicolumn{1}{c|}{\myelement{6}}      &                                & \multicolumn{1}{c}{} &
	\checkmark            &                                      &                                      &
	\multicolumn{1}{c}{\checkmark}       &                                      &
	\checkmark           & \checkmark           & \multicolumn{1}{c}{} &
	\checkmark            &                                      &
	&                                      &                                      &                                      &
	\multicolumn{1}{c}{\checkmark}       &
	&                                      & \checkmark           & \checkmark           &
	\checkmark           & \checkmark           & \checkmark
	& \checkmark                                   & \checkmark                                   &
	\checkmark                                   & \checkmark                                   &
	\multicolumn{1}{c}{}                         & \checkmark                                   &
	\checkmark                                   \\
	\multicolumn{1}{c|}{\myelement{7}}      &                                & \multicolumn{1}{c}{} &
	\checkmark            &                                      &
	&                                      & \multicolumn{1}{c}{\checkmark}       &
	\checkmark           & \checkmark           & \checkmark           &
	\multicolumn{1}{c|}{} &                                      &
	&                                      &                                      &
	&                                      & \multicolumn{1}{c}{\checkmark}
	&                                      & \checkmark           & \checkmark           &
	\checkmark           & \checkmark           & \checkmark
	& \checkmark                                   & \checkmark                                   &
	\checkmark                                   & \checkmark                                   &
	\checkmark                                   & \multicolumn{1}{c}{}                         &
	\checkmark                                   \\
	\multicolumn{1}{c|}{\myelement{8}}      &                                & \multicolumn{1}{c}{} &
	\checkmark            &                                      &
	&                                      & \multicolumn{1}{c}{\checkmark}       &
	\checkmark           & \checkmark           & \checkmark           &
	\multicolumn{1}{c|}{} &                                      &
	&                                      &                                      &
	&                                      &                                      &
	\multicolumn{1}{c}{\checkmark}       & \checkmark           &
	\checkmark           & \checkmark           & \checkmark           &
	\checkmark                                   & \checkmark                                   &
	\checkmark                                   & \checkmark                                   &
	\checkmark                                   & \checkmark                                   &
	\checkmark                                   & \multicolumn{1}{c}{}                         \\
	\midrule
	\multicolumn{2}{c|}{\property}                                & \AlgFalseLiteral     &
	\AlgFalseLiteral      & \multicolumn{1}{c}{\AlgFalseLiteral} &
	\multicolumn{1}{c}{\AlgFalseLiteral} & \multicolumn{1}{c}{\AlgFalseLiteral}
	& \multicolumn{1}{c}{\AlgFalseLiteral} & \AlgFalseLiteral     &
	\AlgFalseLiteral     & \AlgFalseLiteral     & \AlgFalseLiteral      &
	\multicolumn{1}{c}{\AlgFalseLiteral} & \multicolumn{1}{c}{\AlgFalseLiteral}
	& \multicolumn{1}{c}{\AlgFalseLiteral} &
	\multicolumn{1}{c}{\AlgFalseLiteral} & \multicolumn{1}{c}{\AlgFalseLiteral}
	& \multicolumn{1}{c}{\AlgFalseLiteral} &
	\multicolumn{1}{c}{\AlgFalseLiteral} & \multicolumn{1}{c}{\AlgFalseLiteral}
	& \AlgFalseLiteral     & \AlgFalseLiteral     & \AlgFalseLiteral     &
	\AlgFalseLiteral     & \AlgTrueLiteral                              &
	\AlgFalseLiteral                             & \AlgFalseLiteral                             &
	\AlgFalseLiteral                             & \AlgFalseLiteral                             &
	\AlgFalseLiteral                             & \AlgFalseLiteral                             &
	\AlgFalseLiteral                             \\ \bottomrule
\end{tabular}
        }
\end{table*}

\subsection{The \ddmin Algorithm}
The \ddmin algorithm~\cite{zeller2002simplifying} is the first algorithm to
systematically minimize a bug-triggering input to its essence, which has
been widely adopted in program reduction\cite{sun2018perses,
misherghi2006hdd, regehr2012test},
software debloating\cite{christi2017resource, heo2018effective} and
 test suites reduction\cite{groce2014cause, groce2016cause}.
It takes the following two inputs:
\begin{itemize}[leftmargin=*]
\item \program: a list of elements representing a bug-triggering input.
For example, \program can be a list of bytes, characters, lines, tokens, or parse
tree nodes extracted from the bug-triggering input.

\item \property: a property that \program has. Formally, \property can be defined as
a predicate that returns \AlgTrueLiteral if a list of elements preserves the property,
\AlgFalseLiteral otherwise.
\end{itemize}
%
and returns a minimal subset of \program that still preserves \property,
from which excluding any single element
will make the minimal subset lose \property.
This algorithm has been widely used in practice
to facilitate developers in debugging
\cite{regehr2012test,sun2018perses,donaldson2021test,picire}.
It generally consists of the following three steps.

\myparagraph{Initialize} Start by setting the initial subset size \mysize{}{} to
half of the
input list $L$, \ie, \mysize{}{}=$|L|/2$.

\myparagraph{Step 1: Minimize to Subset}
Partition $L$ into subsets of size $s$.
For each subset \mysubset,
check whether \mysubset alone satisfies \property. If yes, keep only
\mysubset and restart from Step 1 with $L = \mysubset$ and the subset
size as half of the new $L$;
otherwise, go to Step 2.

\myparagraph{Step 2: Minimize to Complement}
Partition $L$ into subsets of size $s$.
For each subset
\mysubset, check whether the
complement of \mysubset (\ie, $L/\mysubset = \{ e | e \in L
\wedge e \not\in \mysubset \}$) satisfies \property. If yes, keep the
complement of \mysubset and restart from Step 2 with $L = L /
\mysubset$;
otherwise,
go to Step 3.

\myparagraph{Step 3: Subdivide} If any of the remaining subsets has at
least two elements and thus can be further divided, halve the subset size, \ie, $\mysize{}{} = \mysize{}{} / 2$
and go back to Step 1. If no subset can be further divided (i.e., the subset
size is 1), \ddmin terminates and returns the remaining elements as the
result.

\myparagraph{Round Number \RoundNumber}
Note that we introduce a round number \RoundNumber at the second
column of
\cref{table:running_example_results_ddmin}.
Within each round, the list \listinput is divided into subsets of a fixed size,
on which Step 1 and Step 2 are applied.
A new round begins when no further progress can be made with the
current subset size.
This round number is \emph{not explicitly} present
in the original \ddmin algorithm but exists \emph{implicitly}. In subsequent
sections, we will also use this concept to introduce and simplify the
\probdd
algorithm.


\cref{table:running_example_results_ddmin}
illustrates the step-by-step minimization process of \ddmin with the running
example in \cref{fig:running_example}.
Initially, the input \listinput is [\myelement{1}, \myelement{2}, $\cdots$,
\myelement{8}].
The \ddmin algorithm iteratively
generates variants by gradually decreasing the subset size from 4 to 1.
\begin{enumerate}[leftmargin=*]
	\item Round 1 (\mysize{}{}=4). At the beginning, \ddmin
	splits $L$ into two subsets and generates two variants
	\variant{1} and \variant{2}.
	However, neither of them preserves \property.

	\item Round 2 (\mysize{}{}=2). Next, \ddmin continues to subdivide these
	two subsets into
	smaller ones, and generates eight variants (\ie, \variant{3},
	\variant{4}, $\cdots$, \variant{10})
	by using these subsets and their complements.
	Specifically, the first four variants (\variant{3}, \variant{4}, \variant{5},
	\variant{6}) are the subsets, and the next four variants
	(\variant{7}, \variant{8}, \variant{9}, \variant{10}) are the
	complements of these subsets.
	Again, none of these eight variants preserves \property.

	\item Round 3 (\mysize{}{}=1). Finally,
	\ddmin decreases subset size \mysize{}{} from 2 to 1, and generates more
	variants.
	This time, \variant{23}, which is the
	complement of the subset \{\myelement{5}\},
	preserves \property.
	Hence, the subset \{\myelement{5}\} is permanently removed from
	\listinput.
	Then for each of the remaining subsets \{\myelement{1}\},
	\{\myelement{2}\},
	$\cdots$, \{\myelement{8}\},
	\ddmin restarts testing the complement of each subset,
	\ie, from \variant{24} to \variant{30}.
	However, none of these variants preserves \property,
	and no subset can be further divided,
	so \ddmin terminates with the variant \variant{23} as the final result.
\end{enumerate}

\subsection{Probabilistic Delta Debugging (\probdd)}
Wang \etal~\cite{wang2021probabilistic} proposed the state-of-the-art
algorithm \probdd, significantly surpassing \ddmin in minimizing
bug-triggering programs on C compilers and
benchmarks in software debloating.
\probdd employs Bayesian optimization~\cite{pelikan1999boa} to model
the minimization problem. \probdd assigns a probability to each element in \listinput,
representing its likelihood of being essential for preserving the property
\property.
At each step during the minimization process, \probdd
selects a subset
of elements expected to yield the
highest \gain, and targets these elements in the subset for deletion.
In this section, we outline \probdd's workflow in \cref{alg:probdd}, paving
the way for a deeper
understanding and analysis of \probdd.

\newcommand{\currentMaxGain}{\texttt{currentMaxGain}\xspace}
\begin{algorithm}[h]
\footnotesize
\DontPrintSemicolon
\SetKwInput{KwData}{Input}
\SetKwInput{KwResult}{Output}
\caption{\protect\AlgProbdd{$\listinput, \psi$}
}
\label{alg:probdd}

\SetKwRepeat{Do}{do}{while}
\KwData{\listinput: a list to be minimized.
}
\KwData{$\psi: \searchspace \rightarrow \boolspace$: the property to be
preserved by \listinput.}
\KwData{$\pinit$: the initial probability given by the user.}
\KwResult{the minimized list that still exhibits the property $ \psi $.}


\tcp{Initialize the probability of each element with \pinit}
\lForEach{$\myelement{} \in \listinput$}{
    $\myelement{}.p \gets \pinit$ \label{alg:probdd:init_p}
}

\tcc{The round
	number \RoundNumber, initially 0. \RoundNumber is not explicitly used in
	the
	original \probdd algorithm. It is
displayed for demonstrating \probdd's implicit principles.}
$\RoundNumber \gets 0$ \label{line:probdd:round:init} \;
\label{alg:counterdd:begin}
\While{$\exists \myelement{} \in \listinput : \myelement{}.p < 1$}{
	\label{alg:probdd:check_termination}
	\tcp{Select elements from \listinput for deletion attempt.}
	$\mysubset \gets \AlgSelectSubset(\listinput)$ \;
	\label{alg:probdd:call_select_subset}
	\tcp{Check if removing the subset preserves the property}
	$\texttt{temp} \gets \listinput \setminus \mysubset$ \;
	\label{alg:probdd:remove_s_begin}
	\lIf{$\psi(\texttt{temp})$ = \AlgTrueLiteral}{
		$\listinput \gets \texttt{temp}$ \label{alg:probdd:remove_s}
	} \Else {
   		\tcp{Calculate the factor to update probabilities}
   		$\texttt{factor} \gets \frac{1}{1 - \prod_{\myelement{} \in \mysubset} (1
   		- \myelement{}.p)}$ \\
   		\label{alg:probdd:calculate_factor}
   		\tcp{Update the probabilities of elements in the subset}
   		\lForEach{$\myelement{} \in \mysubset$}{
   			$\myelement{}.p \gets \texttt{factor} \times \myelement{}.p$
   			\label{alg:probdd:update_probability}
   		}
	}
	\label{alg:probdd:remove_s_end}
    \If{\textit{All elements' probability have been
    updated}}{\label{line:probdd:round:test}
        \tcp{Move to the next round.}
        $\RoundNumber = \RoundNumber + 1$\label{line:probdd:round:incr} \;
    }
}
\Return \listinput \; \label{alg:counterdd:end}

\BlankLine

\Fn{\AlgSelectSubset{$\texttt{L}$}}{
\KwData{\listinput: a list of elements to be reduced.}
\KwResult{The subset of elements that maximizes the \gain.}
\tcc{Sort \listinput by ascending probability, with elements having the same
probability in random order.}
$\texttt{sortedL} \gets \AlgRandomizedSort(L)$ \\
\label{alg:probdd:select_subset_begin}
\label{alg:probdd:sort}
$\mysubset \gets \emptyset$ \\
$\currentMaxGain \gets 0$ \\
\ForEach{$ \myelement{} \in \texttt{sortedL}$}{
	$\texttt{tempSubset} \gets \mysubset \cup \{\myelement{}\}$ \\
	$\texttt{gain} \gets |\texttt{tempSubset}| \times \prod_{\myelement{} \in
		\texttt{tempSubset}} (1 - \myelement{}.p)$ \\
			\label{alg:probdd:gain}
	\If{$\texttt{gain} > \currentMaxGain$}{
		$\currentMaxGain \gets \texttt{gain}$ \\
		$\mysubset \gets \texttt{tempSubset}$
	} \lElse {
		\Break
	}
}
\Return \mysubset
\label{alg:probdd:select_subset_end}
}

\end{algorithm}

\myparagraph{Initialize (\cref{alg:probdd:init_p})}
In \listinput, \probdd assigns each element an initial
probability \pinit on \cref{alg:probdd:init_p},
representing the prior likelihood
that each element cannot be removed.

\myparagraph{Step 1: Select elements
(\cref{alg:probdd:call_select_subset},
\cref{alg:probdd:select_subset_begin}--\ref{alg:probdd:select_subset_end})}
First, \probdd sorts the
elements in \listinput by probability in ascending order on \cref{alg:probdd:sort},
and the order of elements with the same probability is determined
randomly.
Then, on \cref{alg:probdd:gain}, it calculates the subset to be removed in
the next attempt via the
proposed \gain $E(\mysize{}{})$, as shown in
 \cref{equation:expected_size_decrease},
with $E(\mysize{}{})$
denoting the expected gain obtained via removing the \emph{first
\mysize{}{} elements}
in $L$ selected for deletion, and
\myelement{i}.\myprobability{}{} denoting
the current probability of the $i$-th element in \listinput.
 \begin{align}
 	E(\mysize{}{}) = \mysize{}{} \times
 	\prod^{\mysize{}{}}_{i=1}{(1 -
 	\myelement{i}.\myprobability{}{})}
 	\label{equation:expected_size_decrease}
 \end{align}


 \noindent
 Note that \probdd has an invariant that the subset \mysubset chosen for deletion attempt is always
 the first \mysize{}{} elements in \listinput.
Every time, the first \mysize{}{*} elements are selected as the optimal
subset
\mysubset, where
\mysize{}{*} maximizes the \gain $E(\mysize{}{})$, elaborated as
 \cref{equation:maximize_expected_size_decrease}.
 \begin{align}
	\mysize{}{*} = \operatorname*{arg\,max}_{\mysize{}{}}
    E(\mysize{}{})
	\label{equation:maximize_expected_size_decrease}
\end{align}

\myparagraph{Step 2: Delete the Subset
(\cref{alg:probdd:remove_s_begin}-\ref{alg:probdd:remove_s_end})} If
\property is still
preserved after
the removal of \mysubset, \probdd removes subset \mysubset
on
\cref{alg:probdd:remove_s}, \ie, keeps only the
complement of \mysubset, and proceeds to Step 1.
%
If \property cannot be preserved after
the removal, on
\cref{alg:probdd:calculate_factor,alg:probdd:update_probability},
\probdd updates the probability of each element in the subset
\mysubset via \cref{equation:probability_updates}, and resumes at Step 1.
It is important to note that if an element \myelement{i} has been individually
deleted but failed, its probability \myelement{i}.\myprobability{}{} will be set
to 1, indicating that this element cannot be removed and will no longer be
considered for deletion.
 \begin{align}
	\myelement{i}.\myprobability{}{} \gets
	\frac{\myelement{i}.\myprobability{}{}}{1-\prod_{\myelement{}
	 \in
	\mysubset}
	{(1 - \myelement{}.\myprobability{}{})}}
	\label{equation:probability_updates}
\end{align}

\myparagraph{Step 3: Check Termination
(\cref{alg:probdd:check_termination})} If every element either has been
deleted, or possesses a probability of 1,
\probdd terminates. If not, it returns to Step 1.

\myparagraph{Round Number \RoundNumber}
Similar to the concept of rounds in \ddmin (see
\cref{table:running_example_results_ddmin}), \probdd also has an \emph{implicit}
round number
\RoundNumber, as introduced
on \cref{line:probdd:round:init} in \cref{alg:probdd} and
the second row of
\cref{table:running_example_results_probdd}.
During a round, the subset size is the same and every subset in \listinput is
attempted for deletion.
Once the probabilities of all elements have been updated, the next round
begins (\ie, $\RoundNumber \gets \RoundNumber + 1$ on
\cref{line:probdd:round:incr}).

\begin{table}[ht]
        \centering
        \caption{Step-by-step outcomes from \probdd on the running
        	example. Similar to
        	\cref{table:running_example_results_ddmin},
        	 round number, subset size and the details of
        	 each
        	 variants
        	 are presented. For each variant, the probability of
        	each element is noted
        	alongside.
        }
        \label{table:running_example_results_probdd}
            \resizebox{\linewidth}{!}{
\begin{tabular}{ccc|cccc|cccc|cccccccc}
	\toprule
	\multicolumn{2}{c|}{Initial}                             &
	\multicolumn{1}{c|}{Variants}  & \multicolumn{2}{c}{\variant{1}}      &
	\multicolumn{2}{c|}{\variant{2}}                                         &
	\multicolumn{2}{c}{\variant{3}}                           &
	\multicolumn{2}{c|}{\variant{4}}                           &
	\multicolumn{2}{c}{\variant{5}}                                         &
	\multicolumn{2}{c}{\variant{6}}                           &
	\multicolumn{2}{c}{\variant{7}}                           &
	\multicolumn{2}{c}{\variant{8}}                           \\ \midrule
	\multicolumn{1}{c|}{Element} & \multicolumn{1}{c|}{Prob} &
	\multicolumn{1}{c|}{Round} &
	\multicolumn{4}{c|}{$r=1$ (\mysize{}{}=4)}
	       &

	\multicolumn{4}{c|}{$r=2$ (\mysize{}{}=2)}
	              &

	\multicolumn{8}{c}{$r=3$ (\mysize{}{}=1)}

	                     \\
	 \midrule
	\multicolumn{1}{c|}{\myelement{1}}      & \multicolumn{1}{c|}{0.25}
	&                                & \multicolumn{1}{l}{}      & 0.37     &
	\checkmark                                   & \multicolumn{1}{c|}{0.37}
	&                                & \multicolumn{1}{c}{0.61} &
	\multicolumn{1}{c}{\checkmark} & \multicolumn{1}{c|}{0.61} &
	\checkmark                                   & \multicolumn{1}{c}{0.61} &
	\multicolumn{1}{c}{\checkmark} & \multicolumn{1}{c}{0.61} &
	\cellcolor[HTML]{CCCCCC}       & \cellcolor[HTML]{CCCCCC} &
	\cellcolor[HTML]{CCCCCC}       & \cellcolor[HTML]{CCCCCC} \\
	\multicolumn{1}{c|}{\myelement{2}}      &
	\multicolumn{1}{c|}{0.25}
	&                                & \checkmark                & 0.25     &
	\multicolumn{1}{l}{\cellcolor[HTML]{CCCCCC}} &
	\cellcolor[HTML]{CCCCCC}  & \cellcolor[HTML]{CCCCCC}       &
	\cellcolor[HTML]{CCCCCC} & \cellcolor[HTML]{CCCCCC}       &
	\cellcolor[HTML]{CCCCCC}  &
	\multicolumn{1}{l}{\cellcolor[HTML]{CCCCCC}} &
	\cellcolor[HTML]{CCCCCC} & \cellcolor[HTML]{CCCCCC}       &
	\cellcolor[HTML]{CCCCCC} & \cellcolor[HTML]{CCCCCC}       &
	\cellcolor[HTML]{CCCCCC} & \cellcolor[HTML]{CCCCCC}       &
	\cellcolor[HTML]{CCCCCC} \\
	\multicolumn{1}{c|}{\myelement{3}}      & \multicolumn{1}{c|}{0.25}
	&                                & \checkmark                & 0.25     &
	\multicolumn{1}{l}{\cellcolor[HTML]{CCCCCC}} &
	\cellcolor[HTML]{CCCCCC}  & \cellcolor[HTML]{CCCCCC}       &
	\cellcolor[HTML]{CCCCCC} & \cellcolor[HTML]{CCCCCC}       &
	\cellcolor[HTML]{CCCCCC}  &
	\multicolumn{1}{l}{\cellcolor[HTML]{CCCCCC}} &
	\cellcolor[HTML]{CCCCCC} & \cellcolor[HTML]{CCCCCC}       &
	\cellcolor[HTML]{CCCCCC} & \cellcolor[HTML]{CCCCCC}       &
	\cellcolor[HTML]{CCCCCC} & \cellcolor[HTML]{CCCCCC}       &
	\cellcolor[HTML]{CCCCCC} \\
	\multicolumn{1}{c|}{\myelement{4}}      & \multicolumn{1}{c|}{0.25}
	&                                & \multicolumn{1}{l}{}      & 0.37     &
	\checkmark                                   & \multicolumn{1}{c|}{0.37} &
	\multicolumn{1}{c}{\checkmark} & \multicolumn{1}{c}{0.37}
	&                                & \multicolumn{1}{c|}{0.61} &
	\multicolumn{1}{l}{\cellcolor[HTML]{CCCCCC}} &
	\cellcolor[HTML]{CCCCCC} & \cellcolor[HTML]{CCCCCC}       &
	\cellcolor[HTML]{CCCCCC} & \cellcolor[HTML]{CCCCCC}       &
	\cellcolor[HTML]{CCCCCC} & \cellcolor[HTML]{CCCCCC}       &
	\cellcolor[HTML]{CCCCCC} \\
	\multicolumn{1}{c|}{\myelement{5}}      & \multicolumn{1}{c|}{0.25}
	&                                & \multicolumn{1}{l}{}      & 0.37     &
	\checkmark                                   & \multicolumn{1}{c|}{0.37}
	&                                & \multicolumn{1}{c}{0.61} &
	\multicolumn{1}{c}{\checkmark} & \multicolumn{1}{c|}{0.61} &
	\checkmark                                   & \multicolumn{1}{c}{0.61}
	&                                & \multicolumn{1}{c}{1}    &
	\multicolumn{1}{c}{\checkmark} & \multicolumn{1}{c}{1}    &
	\multicolumn{1}{c}{\checkmark} & \multicolumn{1}{c}{1}    \\
	\multicolumn{1}{c|}{\myelement{6}}      & \multicolumn{1}{c|}{0.25}
	&                                & \checkmark                & 0.25     &
	\multicolumn{1}{l}{\cellcolor[HTML]{CCCCCC}} &
	\cellcolor[HTML]{CCCCCC}  & \cellcolor[HTML]{CCCCCC}       &
	\cellcolor[HTML]{CCCCCC} & \cellcolor[HTML]{CCCCCC}       &
	\cellcolor[HTML]{CCCCCC}  &
	\multicolumn{1}{l}{\cellcolor[HTML]{CCCCCC}} &
	\cellcolor[HTML]{CCCCCC} & \cellcolor[HTML]{CCCCCC}       &
	\cellcolor[HTML]{CCCCCC} & \cellcolor[HTML]{CCCCCC}       &
	\cellcolor[HTML]{CCCCCC} & \cellcolor[HTML]{CCCCCC}       &
	\cellcolor[HTML]{CCCCCC} \\
	\multicolumn{1}{c|}{\myelement{7}}      & \multicolumn{1}{c|}{0.25}
	&                                & \checkmark                & 0.25     &
	\multicolumn{1}{l}{\cellcolor[HTML]{CCCCCC}} &
	\cellcolor[HTML]{CCCCCC}  & \cellcolor[HTML]{CCCCCC}       &
	\cellcolor[HTML]{CCCCCC} & \cellcolor[HTML]{CCCCCC}       &
	\cellcolor[HTML]{CCCCCC}  &
	\multicolumn{1}{l}{\cellcolor[HTML]{CCCCCC}} &
	\cellcolor[HTML]{CCCCCC} & \cellcolor[HTML]{CCCCCC}       &
	\cellcolor[HTML]{CCCCCC} & \cellcolor[HTML]{CCCCCC}       &
	\cellcolor[HTML]{CCCCCC} & \cellcolor[HTML]{CCCCCC}       &
	\cellcolor[HTML]{CCCCCC} \\
	\multicolumn{1}{c|}{\myelement{8}}      & \multicolumn{1}{c|}{0.25}
	&                                & \multicolumn{1}{l}{}      & 0.37     &
	\checkmark                                   & \multicolumn{1}{c|}{0.37} &
	\multicolumn{1}{c}{\checkmark} & \multicolumn{1}{c}{0.37}
	&                                & \multicolumn{1}{c|}{0.61} &
	\checkmark                                   & \multicolumn{1}{c}{0.61} &
	\multicolumn{1}{c}{\checkmark} & \multicolumn{1}{c}{0.61} &
	\multicolumn{1}{c}{\checkmark} & \multicolumn{1}{c}{0.61}
	&                                & \multicolumn{1}{c}{1}    \\ \midrule
	\multicolumn{3}{c|}{\property}                                                            &
	\multicolumn{2}{c}{\AlgFalseLiteral} &
	\multicolumn{2}{c|}{\AlgTrueLiteral}                                     &
	\multicolumn{2}{c}{\AlgFalseLiteral}                      &
	\multicolumn{2}{c|}{\AlgFalseLiteral}                      &
	\multicolumn{2}{c}{\AlgTrueLiteral}                                     &
	\multicolumn{2}{c}{\AlgFalseLiteral}                      &
	\multicolumn{2}{c}{\AlgTrueLiteral}                       &
	\multicolumn{2}{c}{\AlgFalseLiteral}                      \\ \bottomrule
\end{tabular}
        }
\end{table}
 \cref{table:running_example_results_probdd}
illustrates the step-by-step results of \probdd.
Following the study of
\probdd~\cite{wang2021probabilistic}, the initial probability \pinit
is set to 0.25, resulting in subsets with a size of 4 as per
\cref{equation:maximize_expected_size_decrease}.
\begin{enumerate}[leftmargin=*]
	\item Round 1 (\mysize{}{}=4).
	Similar to the  example in the
	original paper of
	\probdd~\cite{wang2021probabilistic}, we
	assume \probdd
	selects (\myelement{1}, \myelement{4}, \myelement{5}, \myelement{8})
	to delete due to the randomness, thus
	resulting in the variant \variant{1}.
	However, \variant{1} fails to exhibit \property,
	leading to the probability of these selected elements being updated from
	0.25
	to
	$\frac{0.25}{1
		- (1 - 0.25)^{4}} \approx 0.37$, based on
	\cref{equation:probability_updates}. Next, the remaining elements with
	lower
	probability, \ie, (\myelement{2}, \myelement{3}, \myelement{6},
	\myelement{7}), are prioritized and selected
	for deletion, resulting in \variant{2}. This time, the property test passes
	and these elements are removed.

	\item Round 2 (\mysize{}{}=2).
	Given that all probabilities of remaining elements become 0.37, the
	next subset size
	becomes 2. Subsequently, subset (\myelement{1}, \myelement{5})
	are
	attempted to remove in \variant{3}
	and later subset (\myelement{4}, \myelement{8})
	are attempted to remove in \variant{4}, though no subset can be
	successfully
	removed. After these two attempts, all probabilities
	update to
	$\frac{0.37}{1 - (1 - 0.37)^{2}} \approx 0.61$.

	\item Round 3 (\mysize{}{}=1).
	Finally, the subset size becomes 1,
	so each individual element is selected to remove alone. The elements
	\myelement{4} and \myelement{1} are finally
	removed from the final result in \variant{5} and \variant{7}, respectively,
	while
	\myelement{5} and \myelement{8} are verified as non-removable, thus
	being returned
	as the
	final result.
\end{enumerate}

\section{Delving Deeper into Probability and Size}
\label{sec:finding_probability_and_size}

Beginning with this section, we will systematically present our findings.
Each finding will be introduced by first stating the result, followed by the
explanation.
In this section, we theoretically analyze the trend of
probability changes
across rounds, and the approach to derive the optimal subset size.

\subsection{On the Probability in \probdd}
\label{subsec:finding_probability}



\finding{
The probability assigned to each element increases monotonically with the
round number \RoundNumber,
by a factor of approximately \pincreaseratevalue.
Essentially, the probability for each element can be expressed as a function of \RoundNumber and \pinit, i.e.,
\[ \myprobability{r}{} \approx \pincreaseratevalue^{\RoundNumber} \times \myprobability{0}{} \]
}

\myparagraph{An Illustrative Example}
The running example illustrated in
\cref{table:running_example_results_probdd} leads to this finding.
Observation
reveals that
after each element has been attempted for deletion once, \ie, completing
one
round, the probabilities of all remaining elements are updated. The initial
probability is 0.25; after \variant{2}, it changes to 0.37; following
\variant{4}, it
increases
to  0.61; and by the end of \variant{8}, it reaches 1. Consequently, we
hypothesize that
with each deletion attempt, the probability approximately
increases in a predictable manner. Through appropriate simplification, we
can
theoretically model this trend, and thereby model the entire progression of
probability changes.

\subsubsection{Assumption for Theoretical  Analysis}
\label{subsection:assumption}
Besides the above observation from a concrete example,
theoretical analysis is necessary.
To refine the mathematical
model of \probdd for easier representation, analysis and derivation, we assume that
the number of elements in \listinput is always divisible by the subset size.
With this assumption,
the probability of each element will be
updated in the same manner;
as a result, before and after each round,
the probabilities of all elements are always the same, as shown in
\cref{lem:s:uniform-prob}.
This assumption is often applicable in practice.
For instance, in the running example in
\cref{table:running_example_results_probdd}, before each round, the
probabilities associated with each remaining element are identical, ensuring
that all subsets are of identical size. Furthermore, the probabilities of
elements are updated to the same next value after the round.

\begin{lem}
	If the number of elements in \listinput is always divisible by the subset
	size, then after each round, the probabilities of all elements will always
	remain the same.
	\label{lem:s:uniform-prob}
\end{lem}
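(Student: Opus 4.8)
The plan is to prove a slightly stronger claim by induction on the round index $\RoundNumber$: at the beginning of every round, all surviving elements of \listinput share a single common probability. The base case is immediate, since \probdd initializes every element of \listinput to the same value \pinit. For the inductive step, assume that at the start of round $\RoundNumber$ the $n$ surviving elements all have probability $p$, and that $n$ is divisible by the subset size $k$ used by \probdd in this round.

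The first step is to establish the following within-round invariant: \emph{every subset tested during round $\RoundNumber$ has size exactly $k$ and contains only elements whose current probability is $p$.} Equivalently, the yet-untested elements always form a positive-integer number of ``full'' size-$k$ blocks of probability-$p$ elements. This holds at the start of the round. When one such subset $S$ is tested, either the test succeeds, so the $k$ elements of $S$ are deleted and every other element keeps its probability; or the test fails, so the probability update rewrites the probability of exactly the $k$ elements of $S$ to one common value $p' = f(p,k)$ depending only on $p$ and $k$. In both cases the pool of untested probability-$p$ elements shrinks by exactly $k$, so (using that $n$ is a multiple of $k$) it stays a union of full size-$k$ blocks; and since the selection step picks the highest-gain prefix of the elements sorted by probability, and a homogeneous block of probability-$p$ elements always yields the same optimal prefix length $k$, the next subset it selects is again $k$ untested probability-$p$ elements. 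No short ``leftover'' subset, and no subset mixing already-updated elements with fresh ones, is ever formed. Hence round $\RoundNumber$ partitions the $n$ elements into $n/k$ disjoint size-$k$ subsets, each tested exactly once.

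The second step finishes the induction. Any element that survives round $\RoundNumber$ lay in exactly one of those subsets, and the test of that subset must have failed (a successful subset's elements are deleted). Since every failed subset had size $k$ and all of its elements had probability $p$ at the moment of the update, the update assigned each of them the identical new probability $p' = f(p,k)$. Therefore all elements entering round $\RoundNumber+1$ carry the common probability $p'$, which is exactly the induction hypothesis for round $\RoundNumber+1$; if the divisibility condition persists, the argument repeats verbatim, so the probabilities stay uniform after every round.

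I expect the delicate point to be the within-round invariant of the first step: showing that, once part of \listinput has been re-weighted in the middle of a round, the subset-selection step never forms a subset of a non-standard size --- neither a short final block nor a block mixing probability-$p$ with already-lowered elements. The divisibility hypothesis is precisely what rules out the short final block, while homogeneity of each selected subset follows from the monotonicity/unimodality of the expected-reduction-gain objective established earlier; once the invariant is in place, the remainder is routine bookkeeping.
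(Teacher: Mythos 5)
Your proof takes essentially the same route as the paper's: induction on rounds, with the inductive step observing that the probability update depends only on the common current probability and the subset size, and that the divisibility hypothesis forces every subset in the round to have that same size, so all surviving elements receive one identical new value. Your within-round invariant (homogeneous, full-size blocks, no short leftover subset) simply spells out in more detail the step the paper states in one line, so the argument is correct and matches the paper's approach.
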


%

\begin{proof}
We use	mathematical induction to prove this lemma.

\myparagraph{\underline{\normalfont Base Case}}
Initially, all probabilities are set to the same value. Hence,
before the first round, the probabilities of all elements are identical.

\myparagraph{\underline{\normalfont Inductive Step}}
Assume that before a given round, the probabilities of all
elements are identical (induction hypothesis).
After failing to delete a subset \mysubset, \probdd updates the probability
of each element of \mysubset according to \cref{equation:probability_updates}.
This formula depends solely on two factors: the current probability of each
element of \mysubset, \ie,
\myelement{i}.\myprobability{}{}, and the size of the subset $|\mysubset|$.
For \myelement{i}.\myprobability{}{}, by the induction hypothesis, all
elements have the same probability at the
beginning of the round; for $|\mysubset|$, if the total number of elements
in
\listinput is divisible by the subset size, then every subset in the round will
have the same size $|\mysubset|$.
Therefore, both factors \myelement{i}.\myprobability{}{} and
$|\mysubset|$ are identical for all elements in a subset, and the probabilities
of these elements are updated to the same new value using
\cref{equation:probability_updates}. Furthermore, as all subsets undergo
the same update process, the probabilities of all
elements in the list will
remain identical at the end of the round.

\myparagraph{\underline{\normalfont Conclusion}}
The probabilities of all elements remain identical at the end of this round.
\end{proof}

Consequently, as long as the total number of elements is always divisible by
the subset size, the probabilities of all elements will remain identical
throughout the process.
Take the running example in \cref{table:running_example_results_probdd} as
a demonstration. During the reduction, the number of elements is always
divisible by the subset size in each round, \ie, \mysize{}{}=4, \mysize{}{}=2,
\mysize{}{}=1. Therefore, starting with
an
initial probability of 0.25, the probability of each elements remain identical
after each round, being  0.37, 0.61 and 1, respectively.

While it is not always possible for the number of elements to be divisible by
the subset size, the elements will still be partitioned as evenly as possible.
However, such indivisibilities make the theoretical simplification of \probdd
nearly impossible.
Based on our observation when running \probdd, being slightly uneven
during partitioning does not significantly affect
probability updates.
Moreover, we will demonstrate
that the simplified algorithm derived from this assumption has no
significant difference from \probdd in \cref{sec:implication}, via
thorough experimental evaluation.


\subsubsection{Probability \vs  Subset Size Correlation}

In the second step, we derive the correlation between probability
and subset size. Based on the assumption in the previous step, the
probability of each element is identical and represented as
\myprobability{r}{} in round \RoundNumber, thus the
formula of \gain from  \cref{equation:expected_size_decrease} can be
simplified
to
 \begin{align}
	E(\mysize{}{}) = \mysize{}{} \times (1 -
	\myprobability{r}{})^{\mysize{}{}}
	\label{equation:expected_size_decrease_simplified}
\end{align}
Given the
probability of elements $\myprobability{r}{}$ in the
round \RoundNumber, $\mysize{r}{}$
 can be derived
through gradient-based optimization, \ie,
$E'(\mysize{r}{}) = 0$. Therefore, the
optimal size $\mysize{r}{}$ to
maximize $E(s)$ is $-
\frac{1}{\ln(1-\myprobability{r}{})}$.
Subsequently, we can also deduce the next probability
to be
$\myprobability{r+1}{} = \frac{\myprobability{r}{}}{1 -
(1-\myprobability{r}{})^{\mysize{r}{}}}$.
In summary, the correlation between probability and subset size can be
simplified as  \cref{equation:s_i} and  \cref{equation:p_i+1}, in which
subset size $\mysize{r}{}$ is determined by probability
$\myprobability{r}{}$, and
probability \myprobability{r+1}{} in the next round is
determined by both $\myprobability{r}{}$
and $\mysize{r}{}$.

\begin{empheq}[left=\empheqlbrace]{align}
	\mysize{r}{} &= -\frac{1}{\ln(1 - \myprobability{r}{})}
	\label{equation:s_i} \\
	\myprobability{r+1}{} &= \frac{\myprobability{r}{}}{1 -
	(1-\myprobability{r}{})^{\mysize{r}{}}}
	\label{equation:p_i+1}
\end{empheq}


\subsubsection{Trend of Probability Changes}

Through  \cref{equation:p_i+1}, $\myprobability{r+1}{} > \myprobability{r}{}$ always holds,
indicating a monotonic increase of the probability of elements. However,
there
is still room for simplification, as $\mysize{r}{}$ can be represented by $\myprobability{r}{}$,
implying that \myprobability{r+1}{} can be represented solely by
$\myprobability{r}{}$.

\begin{lem}
    $p$ is increased by a factor $\frac{1}{1 - e^{-1}}$, \ie,
    \begin{align}
        \myprobability{r}{} = \frac{\myprobability{r-1}{}}{1 - e^{-1}} =
        \frac{\pinit}{(1 - e^{-1})^{r}} \label{equation:p_increase}
    \end{align}
    \label{lem:p:increase}
\end{lem}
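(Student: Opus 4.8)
The plan is to prove the two equalities of \cref{lem:p:increase} in turn: first the one-round recurrence $\myprobability{r}{}=\myprobability{r-1}{}/(1-e^{-1})$, and then the closed form $\myprobability{r}{}=\pinit/(1-e^{-1})^{r}$ by unrolling this recurrence (a one-line induction on $r$ whose base value is $\pinit$). Essentially all of the content is in the recurrence, and it rests on two ingredients: (i) \cref{lem:s:uniform-prob}, which---under its divisibility hypothesis---guarantees that at the start of every round every surviving element of \listinput carries one common probability, so that the symbol ``$\myprobability{r-1}{}$'' is meaningful; and (ii) the fact that, once the subset size is chosen to maximize the \gain, the probability that an entire subset can be deleted equals exactly $e^{-1}$, regardless of the numeric value of $\myprobability{r-1}{}$.

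\textbf{The key round-level step.} Fix a round $r$ and let $p:=\myprobability{r-1}{}$ be the common probability shared by every element at its start (\cref{lem:s:uniform-prob}). Let $q$ denote the corresponding per-element probability that an element can be deleted; then a size-$k$ subset \mysubset{} survives its deletion test (that is, can be removed from \listinput) with probability $q^{k}$, so its expected reduction gain is $k\,q^{k}$. Treating $k$ as a real parameter and differentiating, $\frac{d}{dk}\!\left(k\,q^{k}\right)=q^{k}\bigl(1+k\ln q\bigr)$, which vanishes only at $k^{\star}=-1/\ln q$; since $k\,q^{k}\to0$ both as $k\to0^{+}$ and as $k\to\infty$, this is the maximizer, and there $q^{k^{\star}}=e^{k^{\star}\ln q}=e^{-1}$. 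Hence the subsets built in round $r$ all use (essentially) this size, every element belongs to exactly one of them, and---in the regime of the lemma, in which no deletion succeeds---each element is updated exactly once by \cref{equation:probability_updates}; substituting the deletion probability $e^{-1}$ into that rule turns $p$ into $p/(1-e^{-1})$. By \cref{lem:s:uniform-prob} this common new value is precisely $\myprobability{r}{}$, which is the desired recurrence.

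\textbf{Finishing.} Unrolling $\myprobability{r}{}=\myprobability{r-1}{}/(1-e^{-1})$ from the initial uniform value $\pinit$ yields $\myprobability{r}{}=\pinit/(1-e^{-1})^{r}$ by induction on $r$, establishing the second equality. The statement is to be read for the rounds preceding the point at which $\myprobability{r}{}$ reaches the threshold near $1$ where the ``no deletion succeeds'' regime necessarily ends.

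\textbf{Main obstacle.} The delicate point is the identity $q^{k^{\star}}=e^{-1}$: the clean factor $\pincreaserate$ is exact only for the real optimizer $k^{\star}=-1/\ln q$, whereas the implementation must round $k^{\star}$ to an integer subset size, so strictly $q^{k}$ is merely close to $e^{-1}$. I would resolve this either by invoking the earlier statement that fixes the (integer) subset size, or by declaring explicitly that \cref{lem:p:increase} concerns the idealized continuous model underlying the amortized analysis; the companion claim that the subset size shrinks by \sizedecreaserate{} per round depends on the same $k^{\star}$ and inherits the identical caveat. The two standing assumptions I would state at the outset are the divisibility hypothesis carried over from \cref{lem:s:uniform-prob} (so that every subset within a round has equal size and uniformity is preserved round to round) and the assumption that no deletion succeeds within the round (otherwise \listinput shrinks mid-round, invalidating both the divisibility condition and the once-per-element update count).
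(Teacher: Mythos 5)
Your proof is correct and follows essentially the same route as the paper's: the heart of both is that choosing $\mysize{r}{}=-1/\ln(1-\myprobability{r}{})$ makes $(1-\myprobability{r}{})^{\mysize{r}{}}=e^{-1}$, so the failed-deletion update $\myprobability{r}{}/(1-(1-\myprobability{r}{})^{\mysize{r}{}})$ becomes $\myprobability{r}{}/(1-e^{-1})$, which is then unrolled to $\pinit/(1-e^{-1})^{r}$. The only difference is that you re-derive the optimal subset size by the derivative argument (which the paper delegates to \cref{lem:s:relation} and \cref{equation:s_i}) and you make explicit the uniformity, no-successful-deletion, and integer-rounding caveats that the paper's proof leaves implicit.
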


\begin{proof}
	Given $\mysize{r}{} = - \frac{1}{\ln(1 - \myprobability{r}{})}$, we can
	deduce that $1 - \myprobability{r}{} =
	e^{-\frac{1}{\mysize{r}{}}}$.

	Subsequently, we substitute $1 - \myprobability{r}{}$ into
	\cref{equation:p_i+1},
	obtaining
	$$\myprobability{r+1}{} = \frac{\myprobability{r}{}}{1 -
	(1-\myprobability{r}{})^{\mysize{r}{}}} = \frac{\myprobability{r}{}}{1 -
		(e^{-\frac{1}{\mysize{r}{}}})^{\mysize{r}{}}}$$
		$$
		= \frac{\myprobability{r}{}}{1 - e^{-1}} \approx
	\pincreaseratevalue \times \myprobability{r}{}$$

	Equivalently, the approximate probability
	after round \RoundNumber can be derived given only
	\myprobability{0}{}, \ie,
	$$\myprobability{r}{}  = \frac{\myprobability{0}{}}{(1 - e^{-1})^{r}} \approx
	\pincreaseratevalue^{r} \times \myprobability{0}{}$$
\end{proof}

Therefore, through empirical
observations on the running example, coupled
with theoretical derivation and simplification, we have identified the pattern
of probability changes \wrt the round number \RoundNumber, \ie,
$\myprobability{r}{} = \frac{\pinit}{(1 - e^{-1})^{r}} \approx
\pincreaseratevalue^{\RoundNumber} \times \myprobability{0}{}$.

\subsection{On the Size of Subsets in \probdd}
\label{subsec:finding_size}

\finding{
The size of subsets in $r$-th round can be analytically pre-determined
	given only the
	probability of this round, \ie,
	$\mysize{r}{} = \operatorname*{arg\,max}_{\mysize{}{} \in
		\mathbb{N}^{+}} \mysize{}{} \times (1 -
		\myprobability{r}{})^{\mysize{}{}}
	$, which is either $\lfloor-\frac{1}{\ln(1 -
		\myprobability{r}{})}\rfloor$ or $\lceil-\frac{1}{\ln(1 -
		\myprobability{r}{})}\rceil$.
}

Based on the Finding 1,
the probability \myprobability{r}{}  can be approximately
estimated by the
current round number \RoundNumber via a
factor. Consequently, we can further derive the subset size \mysize{r}{}
by
maximizing the
\gain in \probdd.

\begin{lem}
	The optimal subset size $\mysize{r}{}$ in round $r$
    is either $\lfloor-\frac{1}{\ln(1 -
	\myprobability{r}{})}\rfloor$ or $\lceil-\frac{1}{\ln(1 -
	\myprobability{r}{})}\rceil$.
	\label{lem:s:relation}
\end{lem}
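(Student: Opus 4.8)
\emph{Proof plan.} The plan is to reduce the claim to a one‑dimensional optimization and settle it by a unimodality argument. Recall that when all elements carry the common probability $\myprobability{r}{}$ — the regime guaranteed by \cref{lem:s:uniform-prob} — probing a subset of size $s$ for deletion removes all $s$ of its elements exactly when none of them is essential, an event of probability $(1-\myprobability{r}{})^{s}$, and removes nothing otherwise. Hence the expected reduction gain of such a probe is $g(s) = s\,(1-\myprobability{r}{})^{s}$, and by definition the optimal subset size $\mysize{r}{}$ is the integer $s \ge 1$ maximizing $g(s)$; it therefore suffices to show that this integer maximizer is one of the two integers nearest to $-\tfrac{1}{\ln(1-\myprobability{r}{})}$.

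First I would analyze the smooth extension $g\colon(0,\infty)\to\mathbb{R}$, $g(x) = x\,(1-\myprobability{r}{})^{x}$. Its derivative is
\[
  g'(x) \;=\; (1-\myprobability{r}{})^{x}\bigl(1 + x\ln(1-\myprobability{r}{})\bigr).
\]
Since $0 < \myprobability{r}{} < 1$ gives $\ln(1-\myprobability{r}{}) < 0$, the factor $(1-\myprobability{r}{})^{x}$ is positive for every $x$, while the affine factor $1 + x\ln(1-\myprobability{r}{})$ is strictly decreasing in $x$, positive for $x < x^{*}$ and negative for $x > x^{*}$, where $x^{*} := -\tfrac{1}{\ln(1-\myprobability{r}{})} > 0$. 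So $g$ is strictly increasing on $(0,x^{*}]$ and strictly decreasing on $[x^{*},\infty)$: it is unimodal, with its unique maximum at $x^{*}$ — the very quantity that appears in \cref{lem:p:increase}.

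Unimodality then localizes the integer optimum. Let $m = \lfloor x^{*}\rfloor$ and $M = \lceil x^{*}\rceil$, so that $M \le m+1$ and every integer $s \ge 1$ satisfies $s \le m$ or $s \ge M$. If $1 \le s \le m$, then $s \le x^{*}$ and, $g$ being increasing on $(0,x^{*}]$, $g(s) \le g(m)$; if $s \ge M$, then $s \ge x^{*}$ and, $g$ being decreasing on $[x^{*},\infty)$, $g(s) \le g(M)$. Therefore $\max\{g(s) : s \in \mathbb{Z},\ s \ge 1\} = \max\{g(m), g(M)\}$, attained at $m$ or at $M$ — and at both (namely at $x^{*}$) when $x^{*}$ is already an integer. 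This is precisely the asserted characterization of $\mysize{r}{}$, the actual value being whichever of $g(\lfloor x^{*}\rfloor)$ and $g(\lceil x^{*}\rceil)$ is larger.

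I do not expect a deep obstacle here; the only points that call for care are pinning down the objective $g$ from the definition of the expected reduction gain, and the boundary regime $x^{*} < 1$ (equivalently $\myprobability{r}{} > 1 - e^{-1}$), where $\lfloor x^{*}\rfloor = 0$ is not a legal subset size. In that case the range $1 \le s \le m$ is empty, $g$ is strictly decreasing on $[1,\infty)\subseteq[x^{*},\infty)$, and the optimum is $s = 1 = \lceil x^{*}\rceil$, which remains consistent with the statement.
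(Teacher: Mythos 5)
Your proposal is correct and follows essentially the same route as the paper: maximize the expected reduction gain $E(s)=s\,(1-\myprobability{r}{})^{s}$, use the derivative of its continuous extension to locate the unique maximum at $-\tfrac{1}{\ln(1-\myprobability{r}{})}$, and conclude by unimodality that the integer optimum is the floor or ceiling, whichever yields the larger gain. Your write-up simply makes the paper's sketch more rigorous (explicit derivative, the integer-localization step, and the boundary case $-\tfrac{1}{\ln(1-\myprobability{r}{})}<1$), which the paper leaves implicit.
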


\begin{proof}
	The \gain is determined by the formula $E(\mysize{r}{})=\mysize{r}{}
	\times (1 -
	\myprobability{r}{})^{\mysize{r}{}}$, which increases initially with
	\mysize{r}{} and then decreases as
	\mysize{r}{} grows further, enabling the optimal solution to be identified
	through derivative analysis.
	Therefore, we can deduce the optimal \mysize{r}{} by solving
	$E'(\mysize{r}{}) = 0$.
	Therefore, the optimal size of subsets $\mysize{r}{}$ in r-th round is
	$-\frac{1}{\ln(1
	- \myprobability{r}{})}$, which will be rounded to either
	\vspace{1pt}
	$\lfloor-\frac{1}{\ln(1 -
	\myprobability{r}{})}\rfloor$ \vspace{1pt}
	 or $\lceil-\frac{1}{\ln(1 -
	\myprobability{r}{})}\rceil$.
	\vspace{2pt}
	The final subset size should be chosen based on which
	integer results in a larger \gain.
\end{proof}


\cref{lem:s:relation}
allows the subset size to be analytically
pre-determined, thus providing the potential for simplification of
\probdd and leading to the proposal of \proj (detailed in
\cref{sec:implication}).

\section{Empirical Experiments}
\label{sec:experimental_results}
In addition to the theoretical derivation above, we conduct an extensive
experimental evaluation on \ddmin and \probdd to gain deeper insights and
achieve further discoveries.
Specifically, we reproduce the experiments on \ddmin and \probdd by Wang
\etal \cite{wang2021probabilistic}, and then
delve deeper into \probdd, analyzing its randomness, the bottlenecks it
overcomes, and its 1-minimality.
Furthermore, we evaluate our proposed \proj (which will be presented in
\cref{sec:implication}), validating our previous theoretical analysis.
Due to limited space, we present the results of both \probdd and \proj
together within this section, but this section primarily focuses
on discussing \probdd, while the next section will focus on \proj.

\subsection{Benchmarks}
To extensively evaluate \ddmin, \probdd and \proj, we use the following
three
benchmark suites (\benchmarkNumTotal benchmarks in total), covering
various use scenarios of minimization algorithms.

\begin{itemize}[leftmargin=*]
	\item \cBenchmarkNameShort:
	\cBenchmarkNum large
	bug-triggering
	programs in C language, each of which triggers a real-world compiler
	bug in either \llvm or \gcc. The original size of benchmarks ranges from
	\cBenchmarkSmallestSize
	tokens to \cBenchmarkLargestSize tokens.
	This benchmark suite has been used
	to evaluate test input minimization work~\cite{sun2018perses,
		wang2021probabilistic, vulcan,trec}.

	\item \debloatBenchmarkNameShort:
	source programs of \debloatBenchmarkNum command-line utilities.
	The original size of benchmarks ranges from
	\debloatBenchmarkSmallestSize tokens to
	\debloatBenchmarkLargestSize
	tokens.
	This benchmark suite was collected by  Heo
	\etal~\cite{heo2018effective} and used
	to evaluate software debloating techniques~\cite{heo2018effective,
		qian2019razor, alhanahnah2022lightweight}.

	\item \xmlBenchmarkNameShort:
	\xmlBenchmarkNum \xml inputs triggering \xmlBenchmarkUniqueNum
	unique
	bugs in \basex, a widely-used \xml processing tool.
	The original size of benchmarks ranges from
	\xmlBenchmarkSmallestSize tokens to \xmlBenchmarkLargestSize
	tokens.
	This benchmark suite
	is generated via \xpress~\cite{li2024finding} and collected by the
	authors of this study, as the original \xml dataset used in \probdd
	paper is not publicly available.


\end{itemize}


\subsection{Evaluation Metrics}
We measure the following aspects as metrics.

\myparagraph{Final Size} This metric assesses the effectiveness of
reduction.
When reducing a list \listinput with a certain property \property, a smaller
final list is
preferred, indicating that
more irrelevant elements have been successfully
eliminated.
In all benchmark suites, the metric is measured by \emph{the number of
	tokens}.

\myparagraph{Execution Time}
The execution time of a minimization algorithm reflects its efficiency.
A minimization algorithm taking less time is more desirable, and execution
time is measured in \emph{seconds}.

\myparagraph{Query Number} This metric further evaluates the
algorithm's
efficiency. During the
reduction process, each time a variant
is produced, the algorithm verifies whether this variant still preserves
the property \property,
referred to as a query. Since queries consume time, a
lower query number is favorable.

\myparagraph{P-value} We calculate the p-value via Wilcoxon signed-rank
test\cite{woolson2005wilcoxon}
between every two algorithms, to investigate whether the performance
differences are significant.
A p-value below 0.05 indicates that we can reject the null hypothesis
(which assumes no difference in performance) at the 0.05 significance
level.
Otherwise, we fail to reject the null hypothesis, suggesting that the
observed difference may not be statistically significant.

\subsection{The Wrapping Frameworks}
The \ddmin algorithm and its variants usually serve as the fundamental
algorithm.
To apply them to a concrete scenario,
an outer wrapping framework is generally needed to handle the structure of
the input.
In our evaluation,
we choose the same wrapping frameworks as those used by \probdd paper.
For those tree-structured bug-triggering
inputs, \ie, \cBenchmarkNameShort and \xmlBenchmarkNameShort, we use
\picireny 21.8~\cite{picireny},
an implementation of \hdd~\cite{misherghi2006hdd}. \picireny parses such
inputs into trees, and then invokes \picire 21.8~\cite{picire}, an
open-sourced \deltadebugging library with \ddmin, \probdd and \proj
implemented, to reduce each level of the trees.
For software debloating on \debloatBenchmarkNameShort,
\chisel~\cite{heo2018effective} is employed, in which \ddmin, \probdd and
\proj are integrated.

All experiments are conducted on a server running Ubuntu 22.04.3 LTS,
 with 4 TB RAM
and two Intel Xeon Gold 6348 CPUs @ 2.60GHz.
To ensure the reproducibility, we employ docker images to
release the source code and the configuration.
Each benchmark is reduced using a single thread.
Following the \probdd paper,
we run each algorithm on each benchmark 5 times and calculate the
geometric average results.

\subsection{Reproduction Study of \probdd}
\label{subsec:reproduction}
\begin{table*}[]
		\centering
		\footnotesize
	\caption{The final size, execution time and query number of
		\ddmin, \probdd and
		\proj on \cBenchmarkNameShort and \debloatBenchmarkNameShort.
		To address significant variations across benchmarks, the geometric
		mean rather than the arithmetic mean is employed, providing a
		smoother measure of the average.
	}
	\label{table:rq1_firsthalf}
	\resizebox{0.8\linewidth}{!}{%
\begin{tabular}{c|l|r|rrr|rrr|rrr}
	\toprule
	\multicolumn{1}{l|}{}                         & \multicolumn{1}{c|}{}
	& \multicolumn{1}{c|}{}                                     & \multicolumn{3}{c|}{Final
	size (\#)}                                                                                              &
	\multicolumn{3}{c|}{Execution time
	(s)}                                                                                           &
	\multicolumn{3}{c}{Query
	number}                                                                                                 \\
	\cmidrule(l){4-12}
	\multicolumn{1}{l|}{\multirow{-2}{*}{}}       &
	\multicolumn{1}{c|}{\multirow{-2}{*}{Benchmark}} &
	\multicolumn{1}{c|}{\multirow{-2}{*}{Original size (\#)}} &
	\multicolumn{1}{c}{\cellcolor[HTML]{CCCCCC}ddmin} &
	\multicolumn{1}{c}{\cellcolor[HTML]{EFEFEF}ProbDD} &
	\multicolumn{1}{c|}{CDD} &
	\multicolumn{1}{c}{\cellcolor[HTML]{CCCCCC}ddmin} &
	\multicolumn{1}{c}{\cellcolor[HTML]{EFEFEF}ProbDD} &
	\multicolumn{1}{c|}{CDD} &
	\multicolumn{1}{c}{\cellcolor[HTML]{CCCCCC}ddmin} &
	\multicolumn{1}{c}{\cellcolor[HTML]{EFEFEF}ProbDD} &
	\multicolumn{1}{c}{CDD} \\ \midrule
	& \llvm-22382                                      &
	9,987                                                     &
	\cellcolor[HTML]{CCCCCC}350                       &
	\cellcolor[HTML]{EFEFEF}353                        & 350                      &
	\cellcolor[HTML]{CCCCCC}1,917                     &
	\cellcolor[HTML]{EFEFEF}1,163                      & 1,005                    &
	\cellcolor[HTML]{CCCCCC}11,388                    &
	\cellcolor[HTML]{EFEFEF}5,973                      & 5,262                   \\
	& \llvm-22704                                      &
	184,444                                                   &
	\cellcolor[HTML]{CCCCCC}786                       &
	\cellcolor[HTML]{EFEFEF}764                        & 745                      &
	\cellcolor[HTML]{CCCCCC}27,924                    &
	\cellcolor[HTML]{EFEFEF}12,418                     & 11,371                   &
	\cellcolor[HTML]{CCCCCC}52,412                    &
	\cellcolor[HTML]{EFEFEF}15,425                     & 14,025                  \\
	& \llvm-23309                                      &
	33,310                                                    &
	\cellcolor[HTML]{CCCCCC}1,316                     &
	\cellcolor[HTML]{EFEFEF}1,338                      & 1,265                    &
	\cellcolor[HTML]{CCCCCC}17,619                    &
	\cellcolor[HTML]{EFEFEF}9,991                      & 10,828                   &
	\cellcolor[HTML]{CCCCCC}55,968                    &
	\cellcolor[HTML]{EFEFEF}19,195                     & 17,953                  \\
	& \llvm-23353                                      &
	30,196                                                    &
	\cellcolor[HTML]{CCCCCC}321                       &
	\cellcolor[HTML]{EFEFEF}336                        & 324                      &
	\cellcolor[HTML]{CCCCCC}3,117                     &
	\cellcolor[HTML]{EFEFEF}1,874                      & 1,400                    &
	\cellcolor[HTML]{CCCCCC}11,719                    &
	\cellcolor[HTML]{EFEFEF}5,757                      & 4,492                   \\
	& \llvm-25900                                      &
	78,960                                                    &
	\cellcolor[HTML]{CCCCCC}941                       &
	\cellcolor[HTML]{EFEFEF}932                        & 937                      &
	\cellcolor[HTML]{CCCCCC}7,258                     &
	\cellcolor[HTML]{EFEFEF}3,683                      & 3,104                    &
	\cellcolor[HTML]{CCCCCC}35,740                    &
	\cellcolor[HTML]{EFEFEF}12,553                     & 12,817                  \\
	& \llvm-26760                                      &
	209,577                                                   &
	\cellcolor[HTML]{CCCCCC}520                       &
	\cellcolor[HTML]{EFEFEF}503                        & 498                      &
	\cellcolor[HTML]{CCCCCC}13,123                    &
	\cellcolor[HTML]{EFEFEF}5,876                      & 5,210                    &
	\cellcolor[HTML]{CCCCCC}30,063                    &
	\cellcolor[HTML]{EFEFEF}9,261                      & 9,792                   \\
	& \llvm-27137                                      &
	174,538                                                   &
	\cellcolor[HTML]{CCCCCC}972                       &
	\cellcolor[HTML]{EFEFEF}1,040                      & 966                      &
	\cellcolor[HTML]{CCCCCC}63,971                    &
	\cellcolor[HTML]{EFEFEF}22,208                     & 23,154                   &
	\cellcolor[HTML]{CCCCCC}122,516                   &
	\cellcolor[HTML]{EFEFEF}22,292                     & 20,460                  \\
	& \llvm-27747                                      &
	173,840                                                   &
	\cellcolor[HTML]{CCCCCC}431                       &
	\cellcolor[HTML]{EFEFEF}463                        & 510                      &
	\cellcolor[HTML]{CCCCCC}6,545                     &
	\cellcolor[HTML]{EFEFEF}4,238                      & 2,932                    &
	\cellcolor[HTML]{CCCCCC}20,000                    &
	\cellcolor[HTML]{EFEFEF}8,193                      & 5,992                   \\
	& \llvm-31259                                      &
	48,799                                                    &
	\cellcolor[HTML]{CCCCCC}1,033                     &
	\cellcolor[HTML]{EFEFEF}965                        & 1,035                    &
	\cellcolor[HTML]{CCCCCC}13,815                    &
	\cellcolor[HTML]{EFEFEF}7,497                      & 8,205                    &
	\cellcolor[HTML]{CCCCCC}35,135                    &
	\cellcolor[HTML]{EFEFEF}10,776                     & 13,445                  \\
	& \gcc-59903                                       &
	57,581                                                    &
	\cellcolor[HTML]{CCCCCC}1,185                     &
	\cellcolor[HTML]{EFEFEF}845                        & 743                      &
	\cellcolor[HTML]{CCCCCC}9,067                     &
	\cellcolor[HTML]{EFEFEF}4,879                      & 3,587                    &
	\cellcolor[HTML]{CCCCCC}47,698                    &
	\cellcolor[HTML]{EFEFEF}15,725                     & 13,844                  \\
	& \gcc-60116                                       &
	75,224                                                    &
	\cellcolor[HTML]{CCCCCC}1,615                     &
	\cellcolor[HTML]{EFEFEF}1,628                      & 1,617                    &
	\cellcolor[HTML]{CCCCCC}44,287                    &
	\cellcolor[HTML]{EFEFEF}27,202                     & 27,195                   &
	\cellcolor[HTML]{CCCCCC}80,059                    &
	\cellcolor[HTML]{EFEFEF}27,268                     & 23,204                  \\
	& \gcc-61383                                       &
	32,449                                                    &
	\cellcolor[HTML]{CCCCCC}959                       &
	\cellcolor[HTML]{EFEFEF}966                        & 974                      &
	\cellcolor[HTML]{CCCCCC}12,579                    &
	\cellcolor[HTML]{EFEFEF}6,514                      & 6,566                    &
	\cellcolor[HTML]{CCCCCC}43,716                    &
	\cellcolor[HTML]{EFEFEF}13,593                     & 14,149                  \\
	& \gcc-61917                                       &
	85,359                                                    &
	\cellcolor[HTML]{CCCCCC}882                       &
	\cellcolor[HTML]{EFEFEF}908                        & 884                      &
	\cellcolor[HTML]{CCCCCC}6,740                     &
	\cellcolor[HTML]{EFEFEF}3,591                      & 2,953                    &
	\cellcolor[HTML]{CCCCCC}31,414                    &
	\cellcolor[HTML]{EFEFEF}12,908                     & 14,194                  \\
	& \gcc-64990                                       &
	148,931                                                   &
	\cellcolor[HTML]{CCCCCC}744                       &
	\cellcolor[HTML]{EFEFEF}876                        & 681                      &
	\cellcolor[HTML]{CCCCCC}21,633                    &
	\cellcolor[HTML]{EFEFEF}11,890                     & 11,119                   &
	\cellcolor[HTML]{CCCCCC}44,521                    &
	\cellcolor[HTML]{EFEFEF}16,074                     & 12,112                  \\
	& \gcc-65383                                       &
	43,942                                                    &
	\cellcolor[HTML]{CCCCCC}706                       &
	\cellcolor[HTML]{EFEFEF}701                        & 709                      &
	\cellcolor[HTML]{CCCCCC}5,132                     &
	\cellcolor[HTML]{EFEFEF}3,543                      & 3,358                    &
	\cellcolor[HTML]{CCCCCC}25,051                    &
	\cellcolor[HTML]{EFEFEF}8,591                      & 9,686                   \\
	& \gcc-66186                                       &
	47,481                                                    &
	\cellcolor[HTML]{CCCCCC}1,012                     &
	\cellcolor[HTML]{EFEFEF}981                        & 1,001                    &
	\cellcolor[HTML]{CCCCCC}14,280                    &
	\cellcolor[HTML]{EFEFEF}7,236                      & 12,478                   &
	\cellcolor[HTML]{CCCCCC}47,253                    &
	\cellcolor[HTML]{EFEFEF}12,741                     & 19,094                  \\
	& \gcc-66375                                       &
	65,488                                                    &
	\cellcolor[HTML]{CCCCCC}1,128                     &
	\cellcolor[HTML]{EFEFEF}1,141                      & 1,204                    &
	\cellcolor[HTML]{CCCCCC}23,576                    &
	\cellcolor[HTML]{EFEFEF}14,182                     & 23,229                   &
	\cellcolor[HTML]{CCCCCC}47,339                    &
	\cellcolor[HTML]{EFEFEF}15,690                     & 16,469                  \\
	& \gcc-70127                                       &
	154,816                                                   &
	\cellcolor[HTML]{CCCCCC}934                       &
	\cellcolor[HTML]{EFEFEF}973                        & 930                      &
	\cellcolor[HTML]{CCCCCC}36,390                    &
	\cellcolor[HTML]{EFEFEF}23,925                     & 24,143                   &
	\cellcolor[HTML]{CCCCCC}54,925                    &
	\cellcolor[HTML]{EFEFEF}15,388                     & 15,219                  \\
	& \gcc-70586                                       &
	212,259                                                   &
	\cellcolor[HTML]{CCCCCC}1,583                     &
	\cellcolor[HTML]{EFEFEF}1,561                      & 1,572                    &
	\cellcolor[HTML]{CCCCCC}28,859                    &
	\cellcolor[HTML]{EFEFEF}13,519                     & 15,818                   &
	\cellcolor[HTML]{CCCCCC}102,603                   &
	\cellcolor[HTML]{EFEFEF}24,716                     & 30,715                  \\
	& \gcc-71626                                       &
	4,397                                                     &
	\cellcolor[HTML]{CCCCCC}184                       &
	\cellcolor[HTML]{EFEFEF}184                        & 184                      &
	\cellcolor[HTML]{CCCCCC}119                       &
	\cellcolor[HTML]{EFEFEF}114                        & 99                       &
	\cellcolor[HTML]{CCCCCC}1,608                     &
	\cellcolor[HTML]{EFEFEF}1,156                      & 1,220                   \\
	\cmidrule(l){2-12}
	\multirow{-21}{*}{\cBenchmarkNameShort}       &
	\multicolumn{1}{c|}{Mean}                        &
	64,599                                                    &
	\cellcolor[HTML]{CCCCCC}777                       &
	\cellcolor[HTML]{EFEFEF}775                        & 760                      &
	\cellcolor[HTML]{CCCCCC}10,486                    &
	\cellcolor[HTML]{EFEFEF}5,828                      & 5,676                    &
	\cellcolor[HTML]{CCCCCC}34,013                    &
	\cellcolor[HTML]{EFEFEF}11,652                     & 11,512                  \\ \midrule
	& bzip2-1.0.5                                      &
	70,530                                                    &
	\cellcolor[HTML]{CCCCCC}20,710                    &
	\cellcolor[HTML]{EFEFEF}20,747                     & 20,756                   &
	\cellcolor[HTML]{CCCCCC}137,463                   &
	\cellcolor[HTML]{EFEFEF}105,782                    & 92,058                   &
	\cellcolor[HTML]{CCCCCC}54,034                    &
	\cellcolor[HTML]{EFEFEF}23,240                     & 19,846                  \\
	& chown-8.2                                        &
	43,869                                                    &
	\cellcolor[HTML]{CCCCCC}9,087                     &
	\cellcolor[HTML]{EFEFEF}9,303                      & 9,310                    &
	\cellcolor[HTML]{CCCCCC}38,902                    &
	\cellcolor[HTML]{EFEFEF}25,616                     & 7,625                    &
	\cellcolor[HTML]{CCCCCC}50,487                    &
	\cellcolor[HTML]{EFEFEF}8,278                      & 8,208                   \\
	& date-8.21                                        &
	53,442                                                    &
	\cellcolor[HTML]{CCCCCC}20,604                    &
	\cellcolor[HTML]{EFEFEF}20,738                     & 21,120                   &
	\cellcolor[HTML]{CCCCCC}115,292                   &
	\cellcolor[HTML]{EFEFEF}16,486                     & 15,378                   &
	\cellcolor[HTML]{CCCCCC}139,934                   &
	\cellcolor[HTML]{EFEFEF}14,235                     & 13,928                  \\
	& grep-2.19                                        &
	127,681                                                   &
	\cellcolor[HTML]{CCCCCC}28,723                    &
	\cellcolor[HTML]{EFEFEF}28,627                     & 28,990                   &
	\cellcolor[HTML]{CCCCCC}97,821                    &
	\cellcolor[HTML]{EFEFEF}85,694                     & 93,552                   &
	\cellcolor[HTML]{CCCCCC}277,130                   &
	\cellcolor[HTML]{EFEFEF}42,246                     & 37,607                  \\
	& gzip-1.2.4                                       &
	45,929                                                    &
	\cellcolor[HTML]{CCCCCC}17,065                    &
	\cellcolor[HTML]{EFEFEF}17,068                     & 17,077                   &
	\cellcolor[HTML]{CCCCCC}73,403                    &
	\cellcolor[HTML]{EFEFEF}55,520                     & 75,913                   &
	\cellcolor[HTML]{CCCCCC}147,035                   &
	\cellcolor[HTML]{EFEFEF}27,569                     & 61,032                  \\
	& mkdir-5.2.1                                      &
	34,801                                                    &
	\cellcolor[HTML]{CCCCCC}8,625                     &
	\cellcolor[HTML]{EFEFEF}8,782                      & 8,418                    &
	\cellcolor[HTML]{CCCCCC}3,227                     &
	\cellcolor[HTML]{EFEFEF}2,428                      & 1,877                    &
	\cellcolor[HTML]{CCCCCC}11,969                    &
	\cellcolor[HTML]{EFEFEF}2,836                      & 2,099                   \\
	& rm-8.4                                           &
	44,459                                                    &
	\cellcolor[HTML]{CCCCCC}8,507                     &
	\cellcolor[HTML]{EFEFEF}8,467                      & 8,461                    &
	\cellcolor[HTML]{CCCCCC}12,087                    &
	\cellcolor[HTML]{EFEFEF}5,008                      & 5,109                    &
	\cellcolor[HTML]{CCCCCC}33,171                    &
	\cellcolor[HTML]{EFEFEF}5,097                      & 5,057                   \\
	& sort-8.16                                        &
	88,068                                                    &
	\cellcolor[HTML]{CCCCCC}14,893                    &
	\cellcolor[HTML]{EFEFEF}14,843                     & 15,834                   &
	\cellcolor[HTML]{CCCCCC}60,631                    &
	\cellcolor[HTML]{EFEFEF}61,739                     & 21,948                   &
	\cellcolor[HTML]{CCCCCC}119,150                   &
	\cellcolor[HTML]{EFEFEF}18,711                     & 7,914                   \\
	& tar-1.14                                         &
	163,296                                                   &
	\cellcolor[HTML]{CCCCCC}20,411                    &
	\cellcolor[HTML]{EFEFEF}20,713                     & 20,592                   &
	\cellcolor[HTML]{CCCCCC}115,234                   &
	\cellcolor[HTML]{EFEFEF}95,765                     & 77,910                   &
	\cellcolor[HTML]{CCCCCC}200,394                   &
	\cellcolor[HTML]{EFEFEF}14,384                     & 12,095                  \\
	& uniq-8.16                                        &
	63,861                                                    &
	\cellcolor[HTML]{CCCCCC}14,350                    &
	\cellcolor[HTML]{EFEFEF}14,262                     & 14,354                   &
	\cellcolor[HTML]{CCCCCC}21,672                    &
	\cellcolor[HTML]{EFEFEF}23,177                     & 19,124                   &
	\cellcolor[HTML]{CCCCCC}25,886                    &
	\cellcolor[HTML]{EFEFEF}4,228                      & 3,669                   \\
	\cmidrule(l){2-12}
	\multirow{-11}{*}{\debloatBenchmarkNameShort} &
	\multicolumn{1}{c|}{Mean}                        &
	65,151                                                    &
	\cellcolor[HTML]{CCCCCC}15,080                    &
	\cellcolor[HTML]{EFEFEF}15,152                     & 15,235                   &
	\cellcolor[HTML]{CCCCCC}43,827                    &
	\cellcolor[HTML]{EFEFEF}28,505                     & 21,782                   &
	\cellcolor[HTML]{CCCCCC}72,140                    &
	\cellcolor[HTML]{EFEFEF}11,803                     & 10,686                  \\
	\bottomrule
\end{tabular}
}
\end{table*}
\begin{table*}[]
	\centering
	\footnotesize
	\caption{The final size, execution time and query number of
		\ddmin, \probdd and
		\proj on \xmlBenchmarkNameShort.
		The last row shows the overall average across all
		three
		benchmark suites.
	}
	\label{table:rq1_secondhalf}
	\resizebox{0.8\linewidth}{!}{%
\begin{tabular}{c|l|r|rrr|rrr|rrr}
	\toprule
	\multicolumn{1}{l|}{}                     & \multicolumn{1}{c|}{}                            &
	\multicolumn{1}{c|}{}                                     & \multicolumn{3}{c|}{Final
	size (\#)}                                                                                              &
	\multicolumn{3}{c|}{Execution time
	(s)}                                                                                           &
	\multicolumn{3}{c}{Query
	number}                                                                                                 \\
	\cmidrule(l){4-12}
	\multicolumn{1}{l|}{\multirow{-2}{*}{}}   &
	\multicolumn{1}{c|}{\multirow{-2}{*}{Benchmark}} &
	\multicolumn{1}{c|}{\multirow{-2}{*}{Original size (\#)}} &
	\multicolumn{1}{c}{\cellcolor[HTML]{CCCCCC}ddmin} &
	\multicolumn{1}{c}{\cellcolor[HTML]{EFEFEF}ProbDD} &
	\multicolumn{1}{c|}{CDD} &
	\multicolumn{1}{c}{\cellcolor[HTML]{CCCCCC}ddmin} &
	\multicolumn{1}{c}{\cellcolor[HTML]{EFEFEF}ProbDD} &
	\multicolumn{1}{c|}{CDD} &
	\multicolumn{1}{c}{\cellcolor[HTML]{CCCCCC}ddmin} &
	\multicolumn{1}{c}{\cellcolor[HTML]{EFEFEF}ProbDD} &
	\multicolumn{1}{c}{CDD} \\ \midrule
	& xml-071d221-1                                    &
	20,090                                                    &
	\cellcolor[HTML]{CCCCCC}10                        &
	\cellcolor[HTML]{EFEFEF}15                         & 20                       &
	\cellcolor[HTML]{CCCCCC}73                        &
	\cellcolor[HTML]{EFEFEF}114                        & 144                      &
	\cellcolor[HTML]{CCCCCC}29                        &
	\cellcolor[HTML]{EFEFEF}50                         & 69                      \\
	& xml-071d221-2                                    &
	20,387                                                    &
	\cellcolor[HTML]{CCCCCC}13                        &
	\cellcolor[HTML]{EFEFEF}14                         & 20                       &
	\cellcolor[HTML]{CCCCCC}155                       &
	\cellcolor[HTML]{EFEFEF}146                        & 243                      &
	\cellcolor[HTML]{CCCCCC}60                        &
	\cellcolor[HTML]{EFEFEF}69                         & 117                     \\
	& xml-1e9bc83-1                                    &
	20,327                                                    &
	\cellcolor[HTML]{CCCCCC}38                        &
	\cellcolor[HTML]{EFEFEF}49                         & 24                       &
	\cellcolor[HTML]{CCCCCC}491                       &
	\cellcolor[HTML]{EFEFEF}522                        & 284                      &
	\cellcolor[HTML]{CCCCCC}235                       &
	\cellcolor[HTML]{EFEFEF}236                        & 115                     \\
	& xml-1e9bc83-2                                    &
	20,222                                                    &
	\cellcolor[HTML]{CCCCCC}79                        &
	\cellcolor[HTML]{EFEFEF}78                         & 76                       &
	\cellcolor[HTML]{CCCCCC}1,391                     &
	\cellcolor[HTML]{EFEFEF}814                        & 867                      &
	\cellcolor[HTML]{CCCCCC}725                       &
	\cellcolor[HTML]{EFEFEF}390                        & 384                     \\
	& xml-1e9bc83-3                                    &
	20,219                                                    &
	\cellcolor[HTML]{CCCCCC}69                        &
	\cellcolor[HTML]{EFEFEF}70                         & 72                       &
	\cellcolor[HTML]{CCCCCC}1,313                     &
	\cellcolor[HTML]{EFEFEF}893                        & 889                      &
	\cellcolor[HTML]{CCCCCC}619                       &
	\cellcolor[HTML]{EFEFEF}416                        & 404                     \\
	& xml-1e9bc83-4                                    &
	19,985                                                    &
	\cellcolor[HTML]{CCCCCC}156                       &
	\cellcolor[HTML]{EFEFEF}139                        & 143                      &
	\cellcolor[HTML]{CCCCCC}3,911                     &
	\cellcolor[HTML]{EFEFEF}1,939                      & 2,521                    &
	\cellcolor[HTML]{CCCCCC}1,943                     &
	\cellcolor[HTML]{EFEFEF}935                        & 1,173                   \\
	& xml-1e9bc83-5                                    &
	20,579                                                    &
	\cellcolor[HTML]{CCCCCC}81                        &
	\cellcolor[HTML]{EFEFEF}73                         & 75                       &
	\cellcolor[HTML]{CCCCCC}1,355                     &
	\cellcolor[HTML]{EFEFEF}929                        & 882                      &
	\cellcolor[HTML]{CCCCCC}746                       &
	\cellcolor[HTML]{EFEFEF}485                        & 428                     \\
	& xml-1e9bc83-6                                    &
	19,880                                                    &
	\cellcolor[HTML]{CCCCCC}127                       &
	\cellcolor[HTML]{EFEFEF}126                        & 124                      &
	\cellcolor[HTML]{CCCCCC}3,563                     &
	\cellcolor[HTML]{EFEFEF}1,852                      & 1,548                    &
	\cellcolor[HTML]{CCCCCC}1,907                     &
	\cellcolor[HTML]{EFEFEF}964                        & 749                     \\
	& xml-1e9bc83-7                                    &
	20,297                                                    &
	\cellcolor[HTML]{CCCCCC}111                       &
	\cellcolor[HTML]{EFEFEF}114                        & 111                      &
	\cellcolor[HTML]{CCCCCC}3,419                     &
	\cellcolor[HTML]{EFEFEF}1,684                      & 2,330                    &
	\cellcolor[HTML]{CCCCCC}1,757                     &
	\cellcolor[HTML]{EFEFEF}827                        & 931                     \\
	& xml-1e9bc83-8                                    &
	20,327                                                    &
	\cellcolor[HTML]{CCCCCC}100                       &
	\cellcolor[HTML]{EFEFEF}107                        & 100                      &
	\cellcolor[HTML]{CCCCCC}2,862                     &
	\cellcolor[HTML]{EFEFEF}1,636                      & 1,446                    &
	\cellcolor[HTML]{CCCCCC}1,451                     &
	\cellcolor[HTML]{EFEFEF}751                        & 592                     \\
	& xml-1e9bc83-9                                    &
	20,330                                                    &
	\cellcolor[HTML]{CCCCCC}128                       &
	\cellcolor[HTML]{EFEFEF}73                         & 128                      &
	\cellcolor[HTML]{CCCCCC}2,850                     &
	\cellcolor[HTML]{EFEFEF}1,230                      & 2,494                    &
	\cellcolor[HTML]{CCCCCC}1,437                     &
	\cellcolor[HTML]{EFEFEF}561                        & 832                     \\
	& xml-2d4ec80-1                                    &
	20,129                                                    &
	\cellcolor[HTML]{CCCCCC}76                        &
	\cellcolor[HTML]{EFEFEF}72                         & 76                       &
	\cellcolor[HTML]{CCCCCC}570                       &
	\cellcolor[HTML]{EFEFEF}522                        & 791                      &
	\cellcolor[HTML]{CCCCCC}384                       &
	\cellcolor[HTML]{EFEFEF}304                        & 354                     \\
	& xml-327c8af-1                                    &
	20,207                                                    &
	\cellcolor[HTML]{CCCCCC}55                        &
	\cellcolor[HTML]{EFEFEF}55                         & 55                       &
	\cellcolor[HTML]{CCCCCC}864                       &
	\cellcolor[HTML]{EFEFEF}625                        & 958                      &
	\cellcolor[HTML]{CCCCCC}527                       &
	\cellcolor[HTML]{EFEFEF}319                        & 392                     \\
	& xml-3398ac2-1                                    &
	20,414                                                    &
	\cellcolor[HTML]{CCCCCC}45                        &
	\cellcolor[HTML]{EFEFEF}44                         & 48                       &
	\cellcolor[HTML]{CCCCCC}345                       &
	\cellcolor[HTML]{EFEFEF}383                        & 605                      &
	\cellcolor[HTML]{CCCCCC}225                       &
	\cellcolor[HTML]{EFEFEF}192                        & 268                     \\
	& xml-3398ac2-2                                    &
	19,290                                                    &
	\cellcolor[HTML]{CCCCCC}48                        &
	\cellcolor[HTML]{EFEFEF}48                         & 48                       &
	\cellcolor[HTML]{CCCCCC}613                       &
	\cellcolor[HTML]{EFEFEF}492                        & 578                      &
	\cellcolor[HTML]{CCCCCC}358                       &
	\cellcolor[HTML]{EFEFEF}255                        & 238                     \\
	& xml-3398ac2-3                                    &
	20,222                                                    &
	\cellcolor[HTML]{CCCCCC}62                        &
	\cellcolor[HTML]{EFEFEF}62                         & 62                       &
	\cellcolor[HTML]{CCCCCC}632                       &
	\cellcolor[HTML]{EFEFEF}574                        & 764                      &
	\cellcolor[HTML]{CCCCCC}347                       &
	\cellcolor[HTML]{EFEFEF}276                        & 306                     \\
	& xml-3398ac2-4                                    &
	19,913                                                    &
	\cellcolor[HTML]{CCCCCC}111                       &
	\cellcolor[HTML]{EFEFEF}112                        & 111                      &
	\cellcolor[HTML]{CCCCCC}1,419                     &
	\cellcolor[HTML]{EFEFEF}1,197                      & 1,722                    &
	\cellcolor[HTML]{CCCCCC}802                       &
	\cellcolor[HTML]{EFEFEF}584                        & 680                     \\
	& xml-3398ac2-5                                    &
	20,477                                                    &
	\cellcolor[HTML]{CCCCCC}44                        &
	\cellcolor[HTML]{EFEFEF}38                         & 44                       &
	\cellcolor[HTML]{CCCCCC}850                       &
	\cellcolor[HTML]{EFEFEF}531                        & 633                      &
	\cellcolor[HTML]{CCCCCC}507                       &
	\cellcolor[HTML]{EFEFEF}269                        & 261                     \\
	& xml-4c99b96-1                                    &
	20,513                                                    &
	\cellcolor[HTML]{CCCCCC}80                        &
	\cellcolor[HTML]{EFEFEF}78                         & 80                       &
	\cellcolor[HTML]{CCCCCC}1,427                     &
	\cellcolor[HTML]{EFEFEF}1,126                      & 1,385                    &
	\cellcolor[HTML]{CCCCCC}699                       &
	\cellcolor[HTML]{EFEFEF}452                        & 438                     \\
	& xml-4c99b96-10                                   &
	20,522                                                    &
	\cellcolor[HTML]{CCCCCC}46                        &
	\cellcolor[HTML]{EFEFEF}47                         & 46                       &
	\cellcolor[HTML]{CCCCCC}1,012                     &
	\cellcolor[HTML]{EFEFEF}760                        & 873                      &
	\cellcolor[HTML]{CCCCCC}443                       &
	\cellcolor[HTML]{EFEFEF}299                        & 256                     \\
	& xml-4c99b96-11                                   &
	19,901                                                    &
	\cellcolor[HTML]{CCCCCC}53                        &
	\cellcolor[HTML]{EFEFEF}54                         & 53                       &
	\cellcolor[HTML]{CCCCCC}868                       &
	\cellcolor[HTML]{EFEFEF}661                        & 791                      &
	\cellcolor[HTML]{CCCCCC}357                       &
	\cellcolor[HTML]{EFEFEF}252                        & 236                     \\
	& xml-4c99b96-12                                   &
	19,775                                                    &
	\cellcolor[HTML]{CCCCCC}102                       &
	\cellcolor[HTML]{EFEFEF}104                        & 102                      &
	\cellcolor[HTML]{CCCCCC}2,429                     &
	\cellcolor[HTML]{EFEFEF}1,592                      & 2,422                    &
	\cellcolor[HTML]{CCCCCC}1,077                     &
	\cellcolor[HTML]{EFEFEF}626                        & 773                     \\
	& xml-4c99b96-13                                   &
	20,114                                                    &
	\cellcolor[HTML]{CCCCCC}60                        &
	\cellcolor[HTML]{EFEFEF}61                         & 60                       &
	\cellcolor[HTML]{CCCCCC}1,132                     &
	\cellcolor[HTML]{EFEFEF}870                        & 989                      &
	\cellcolor[HTML]{CCCCCC}494                       &
	\cellcolor[HTML]{EFEFEF}335                        & 309                     \\
	& xml-4c99b96-14                                   &
	19,970                                                    &
	\cellcolor[HTML]{CCCCCC}102                       &
	\cellcolor[HTML]{EFEFEF}103                        & 102                      &
	\cellcolor[HTML]{CCCCCC}2,147                     &
	\cellcolor[HTML]{EFEFEF}1,687                      & 1,534                    &
	\cellcolor[HTML]{CCCCCC}987                       &
	\cellcolor[HTML]{EFEFEF}600                        & 504                     \\
	& xml-4c99b96-15                                   &
	20,138                                                    &
	\cellcolor[HTML]{CCCCCC}46                        &
	\cellcolor[HTML]{EFEFEF}46                         & 46                       &
	\cellcolor[HTML]{CCCCCC}843                       &
	\cellcolor[HTML]{EFEFEF}788                        & 729                      &
	\cellcolor[HTML]{CCCCCC}381                       &
	\cellcolor[HTML]{EFEFEF}297                        & 244                     \\
	& xml-4c99b96-16                                   &
	20,126                                                    &
	\cellcolor[HTML]{CCCCCC}67                        &
	\cellcolor[HTML]{EFEFEF}68                         & 70                       &
	\cellcolor[HTML]{CCCCCC}1,366                     &
	\cellcolor[HTML]{EFEFEF}1,189                      & 1,125                    &
	\cellcolor[HTML]{CCCCCC}658                       &
	\cellcolor[HTML]{EFEFEF}414                        & 409                     \\
	& xml-4c99b96-17                                   &
	20,210                                                    &
	\cellcolor[HTML]{CCCCCC}67                        &
	\cellcolor[HTML]{EFEFEF}68                         & 70                       &
	\cellcolor[HTML]{CCCCCC}1,360                     &
	\cellcolor[HTML]{EFEFEF}1,310                      & 1,120                    &
	\cellcolor[HTML]{CCCCCC}658                       &
	\cellcolor[HTML]{EFEFEF}414                        & 409                     \\
	& xml-4c99b96-18                                   &
	20,390                                                    &
	\cellcolor[HTML]{CCCCCC}28                        &
	\cellcolor[HTML]{EFEFEF}28                         & 25                       &
	\cellcolor[HTML]{CCCCCC}724                       &
	\cellcolor[HTML]{EFEFEF}492                        & 410                      &
	\cellcolor[HTML]{CCCCCC}324                       &
	\cellcolor[HTML]{EFEFEF}164                        & 137                     \\
	& xml-4c99b96-19                                   &
	20,192                                                    &
	\cellcolor[HTML]{CCCCCC}81                        &
	\cellcolor[HTML]{EFEFEF}82                         & 81                       &
	\cellcolor[HTML]{CCCCCC}1,729                     &
	\cellcolor[HTML]{EFEFEF}1,798                      & 1,895                    &
	\cellcolor[HTML]{CCCCCC}793                       &
	\cellcolor[HTML]{EFEFEF}540                        & 710                     \\
	& xml-4c99b96-2                                    &
	20,750                                                    &
	\cellcolor[HTML]{CCCCCC}53                        &
	\cellcolor[HTML]{EFEFEF}54                         & 53                       &
	\cellcolor[HTML]{CCCCCC}878                       &
	\cellcolor[HTML]{EFEFEF}1,017                      & 912                      &
	\cellcolor[HTML]{CCCCCC}362                       &
	\cellcolor[HTML]{EFEFEF}281                        & 318                     \\
	& xml-4c99b96-3                                    &
	20,015                                                    &
	\cellcolor[HTML]{CCCCCC}67                        &
	\cellcolor[HTML]{EFEFEF}68                         & 67                       &
	\cellcolor[HTML]{CCCCCC}1,102                     &
	\cellcolor[HTML]{EFEFEF}1,254                      & 900                      &
	\cellcolor[HTML]{CCCCCC}483                       &
	\cellcolor[HTML]{EFEFEF}357                        & 337                     \\
	& xml-4c99b96-4                                    &
	20,201                                                    &
	\cellcolor[HTML]{CCCCCC}67                        &
	\cellcolor[HTML]{EFEFEF}67                         & 67                       &
	\cellcolor[HTML]{CCCCCC}1,097                     &
	\cellcolor[HTML]{EFEFEF}1,238                      & 860                      &
	\cellcolor[HTML]{CCCCCC}482                       &
	\cellcolor[HTML]{EFEFEF}365                        & 328                     \\
	& xml-4c99b96-5                                    &
	20,279                                                    &
	\cellcolor[HTML]{CCCCCC}63                        &
	\cellcolor[HTML]{EFEFEF}62                         & 60                       &
	\cellcolor[HTML]{CCCCCC}1,170                     &
	\cellcolor[HTML]{EFEFEF}1,155                      & 1,029                    &
	\cellcolor[HTML]{CCCCCC}515                       &
	\cellcolor[HTML]{EFEFEF}347                        & 364                     \\
	& xml-4c99b96-6                                    &
	19,973                                                    &
	\cellcolor[HTML]{CCCCCC}81                        &
	\cellcolor[HTML]{EFEFEF}82                         & 87                       &
	\cellcolor[HTML]{CCCCCC}1,258                     &
	\cellcolor[HTML]{EFEFEF}1,338                      & 1,056                    &
	\cellcolor[HTML]{CCCCCC}511                       &
	\cellcolor[HTML]{EFEFEF}381                        & 441                     \\
	& xml-4c99b96-7                                    &
	19,973                                                    &
	\cellcolor[HTML]{CCCCCC}115                       &
	\cellcolor[HTML]{EFEFEF}115                        & 115                      &
	\cellcolor[HTML]{CCCCCC}3,399                     &
	\cellcolor[HTML]{EFEFEF}2,560                      & 1,535                    &
	\cellcolor[HTML]{CCCCCC}1,485                     &
	\cellcolor[HTML]{EFEFEF}811                        & 676                     \\
	& xml-4c99b96-8                                    &
	20,579                                                    &
	\cellcolor[HTML]{CCCCCC}42                        &
	\cellcolor[HTML]{EFEFEF}41                         & 42                       &
	\cellcolor[HTML]{CCCCCC}926                       &
	\cellcolor[HTML]{EFEFEF}810                        & 687                      &
	\cellcolor[HTML]{CCCCCC}422                       &
	\cellcolor[HTML]{EFEFEF}254                        & 323                     \\
	& xml-4c99b96-9                                    &
	20,075                                                    &
	\cellcolor[HTML]{CCCCCC}53                        &
	\cellcolor[HTML]{EFEFEF}53                         & 53                       &
	\cellcolor[HTML]{CCCCCC}884                       &
	\cellcolor[HTML]{EFEFEF}940                        & 650                      &
	\cellcolor[HTML]{CCCCCC}368                       &
	\cellcolor[HTML]{EFEFEF}278                        & 265                     \\
	& xml-8ede045-1                                    &
	20,192                                                    &
	\cellcolor[HTML]{CCCCCC}48                        &
	\cellcolor[HTML]{EFEFEF}61                         & 69                       &
	\cellcolor[HTML]{CCCCCC}1,152                     &
	\cellcolor[HTML]{EFEFEF}1,232                      & 1,137                    &
	\cellcolor[HTML]{CCCCCC}499                       &
	\cellcolor[HTML]{EFEFEF}389                        & 487                     \\
	& xml-8ede045-2                                    &
	20,177                                                    &
	\cellcolor[HTML]{CCCCCC}17                        &
	\cellcolor[HTML]{EFEFEF}22                         & 34                       &
	\cellcolor[HTML]{CCCCCC}316                       &
	\cellcolor[HTML]{EFEFEF}335                        & 360                      &
	\cellcolor[HTML]{CCCCCC}132                       &
	\cellcolor[HTML]{EFEFEF}107                        & 164                     \\
	& xml-8ede045-3                                    &
	20,393                                                    &
	\cellcolor[HTML]{CCCCCC}31                        &
	\cellcolor[HTML]{EFEFEF}31                         & 31                       &
	\cellcolor[HTML]{CCCCCC}476                       &
	\cellcolor[HTML]{EFEFEF}431                        & 313                      &
	\cellcolor[HTML]{CCCCCC}199                       &
	\cellcolor[HTML]{EFEFEF}134                        & 134                     \\
	& xml-8ede045-4                                    &
	20,123                                                    &
	\cellcolor[HTML]{CCCCCC}31                        &
	\cellcolor[HTML]{EFEFEF}25                         & 31                       &
	\cellcolor[HTML]{CCCCCC}578                       &
	\cellcolor[HTML]{EFEFEF}472                        & 435                      &
	\cellcolor[HTML]{CCCCCC}250                       &
	\cellcolor[HTML]{EFEFEF}151                        & 181                     \\
	& xml-8ede045-5                                    &
	20,051                                                    &
	\cellcolor[HTML]{CCCCCC}17                        &
	\cellcolor[HTML]{EFEFEF}20                         & 17                       &
	\cellcolor[HTML]{CCCCCC}185                       &
	\cellcolor[HTML]{EFEFEF}246                        & 204                      &
	\cellcolor[HTML]{CCCCCC}61                        &
	\cellcolor[HTML]{EFEFEF}72                         & 80                      \\
	& xml-8ede045-6                                    &
	20,636                                                    &
	\cellcolor[HTML]{CCCCCC}73                        &
	\cellcolor[HTML]{EFEFEF}80                         & 76                       &
	\cellcolor[HTML]{CCCCCC}1,365                     &
	\cellcolor[HTML]{EFEFEF}1,479                      & 1,545                    &
	\cellcolor[HTML]{CCCCCC}552                       &
	\cellcolor[HTML]{EFEFEF}469                        & 700                     \\
	& xml-8ede045-7                                    &
	20,054                                                    &
	\cellcolor[HTML]{CCCCCC}106                       &
	\cellcolor[HTML]{EFEFEF}106                        & 106                      &
	\cellcolor[HTML]{CCCCCC}2,851                     &
	\cellcolor[HTML]{EFEFEF}1,686                      & 1,237                    &
	\cellcolor[HTML]{CCCCCC}1,447                     &
	\cellcolor[HTML]{EFEFEF}614                        & 634                     \\
	& xml-8ede045-8                                    &
	20,177                                                    &
	\cellcolor[HTML]{CCCCCC}76                        &
	\cellcolor[HTML]{EFEFEF}78                         & 76                       &
	\cellcolor[HTML]{CCCCCC}1,397                     &
	\cellcolor[HTML]{EFEFEF}1,327                      & 1,060                    &
	\cellcolor[HTML]{CCCCCC}597                       &
	\cellcolor[HTML]{EFEFEF}449                        & 475                     \\
	& xml-f053486-1                                    &
	20,030                                                    &
	\cellcolor[HTML]{CCCCCC}10                        &
	\cellcolor[HTML]{EFEFEF}10                         & 10                       &
	\cellcolor[HTML]{CCCCCC}101                       &
	\cellcolor[HTML]{EFEFEF}134                        & 76                       &
	\cellcolor[HTML]{CCCCCC}31                        &
	\cellcolor[HTML]{EFEFEF}41                         & 28                      \\
	\cmidrule(l){2-12}
	\multirow{-47}{*}{\xmlBenchmarkNameShort} &
	\multicolumn{1}{c|}{Mean}                        &
	20,190                                                    &
	\cellcolor[HTML]{CCCCCC}56                        &
	\cellcolor[HTML]{EFEFEF}56                         & 58                       &
	\cellcolor[HTML]{CCCCCC}972                       &
	\cellcolor[HTML]{EFEFEF}819                        & 821                      &
	\cellcolor[HTML]{CCCCCC}453                       &
	\cellcolor[HTML]{EFEFEF}314                        & 327                     \\ \midrule
	All                                       & \multicolumn{1}{c|}{Mean}                        &
	31,989                                                    &
	\cellcolor[HTML]{CCCCCC}233                       &
	\cellcolor[HTML]{EFEFEF}235                        & 237                      &
	\cellcolor[HTML]{CCCCCC}2,999                     &
	\cellcolor[HTML]{EFEFEF}2,189                      & 2,102                    &
	\cellcolor[HTML]{CCCCCC}2,752                     &
	\cellcolor[HTML]{EFEFEF}1,309                      & 1,320                   \\
	\bottomrule
\end{tabular}
	}
\end{table*}

To comprehensively reproduce the results of
\probdd~\cite{wang2021probabilistic}, we evaluate \ddmin and \probdd
using three benchmark suites, containing a total of \benchmarkNumTotal
benchmarks.
Following the settings of \probdd~\cite{wang2021probabilistic},
we set the empirically estimated remaining rate as the initialization
probability
\pinit, specifically, 0.1 for \cBenchmarkNameShort and
\debloatBenchmarkNameShort, and 2.5e-3 for \xmlBenchmarkNameShort.
The detailed results are shown in
\cref{table:rq1_firsthalf} and \cref{table:rq1_secondhalf}.



\myparagraph{Efficiency and Effectiveness}
Through our
reproduction study, we find that the performance of \probdd aligns with the
results reported in the original paper, showing that \probdd is
significantly more efficient than \ddmin.
Across three benchmark suites, \probdd
requires \allBenchmarkNoTimeoutTimeDecRateProbddVsDdmin less time
and
\allBenchmarkNoTimeoutQueryDecRateProbddVsDdmin fewer queries,
with p-value being
\allBenchmarkTimeProbddVsDdminPvalue and
\allBenchmarkQueryProbddVsDdminPvalue, respectively.
Moreover, we
assess the effectiveness by measuring the sizes of the final minimized
results. The effectiveness of \ddmin and \probdd varies across
benchmarks. However, the Wilcoxon signed-rank test yields a p-value of
\allBenchmarkSizeProbddVsDdminPvalue, which is substantially higher
than the 0.05 significance level. Therefore, we fail to reject the null
hypothesis, suggesting no statistically significant difference in overall
performance between the two algorithms.






\subsection{Impact of Randomness in \probdd}
\label{subsec:finding_randomness}
\finding{Randomness has no significant
	impact on the performance of \probdd.
}

In \probdd, elements with different probabilities are sorted accordingly,
while elements with the same probability are randomly shuffled.
However, randomness alone intuitively does not ensure a higher probability
of
escaping local optima and
the effect
of this randomness on performance has not been thoroughly
investigated.

To this end, we conduct an ablation study by removing
such randomness, creating a variant called \probddnor. We evaluate this
variant across all benchmarks.
The results indicate that the randomness
does not significantly impact
performance. Specifically, in terms of final
size,
execution time, and query number, \probddnor achieves
\allBenchmarkSizeAvgProbddNoshuffle,
\allBenchmarkTimeAvgProbddNoshuffle, and
\allBenchmarkQueryAvgProbddNoshuffle compared to
\allBenchmarkSizeAvgProbdd, \allBenchmarkTimeAvgProbdd, and
\allBenchmarkQueryAvgProbdd of \probdd, respectively.
The p-values of \allBenchmarkSizeProbddVsProbddNoshufflePvalue,
\allBenchmarkTimeProbddVsProbddNoshufflePvalue, and
\allBenchmarkQueryProbddVsProbddNoshufflePvalue all exceed 0.05, so
we fail to reject the null hypothesis, indicating no statistically significant
differences.

\subsection{Bottleneck Overcome by \probdd}
\label{subsec:finding_bottleneck}
\finding{On tree-structured inputs, inefficient deletion
	attempts on complements
	and repeated
	attempts account for the bottlenecks of \ddmin, which are overcome by
	\probdd.
}
In the study of \probdd, the authors demonstrate that \probdd is
more efficient than the baseline approach (\ddmin) in tree-based reduction
scenarios, where the inputs are parsed into tree representations before
reduction. Therefore, to uncover the root cause of this superiority,
we follow the same application scenario and analyze the behavior of
\probdd
in reducing the tree-structured inputs.

\begin{figure*}[htbp]
	\centering
	\begin{subfigure}{0.32\linewidth}
		\centering
		\includegraphics[width=\linewidth]{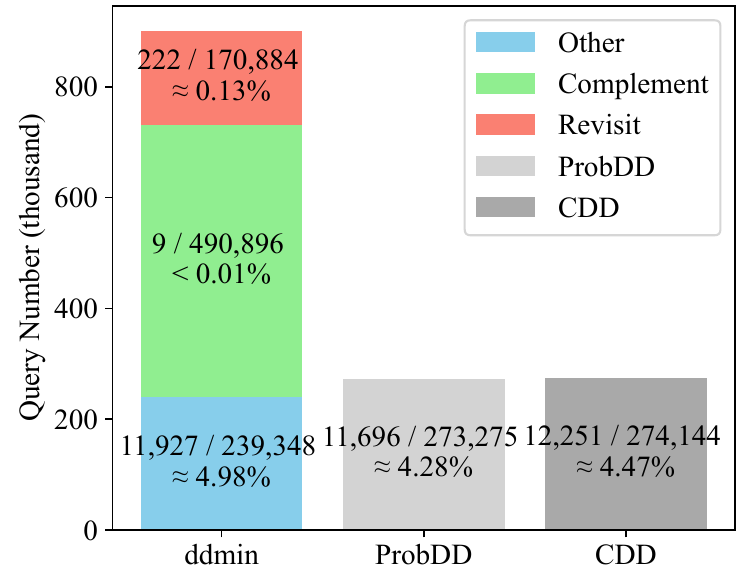}
		\caption{ On \cBenchmarkNameShort }
		\label{fig:c_stacked_barchart}
	\end{subfigure}
	\hfil
	\begin{subfigure}{0.32\linewidth}
		\centering
		\includegraphics[width=\linewidth]{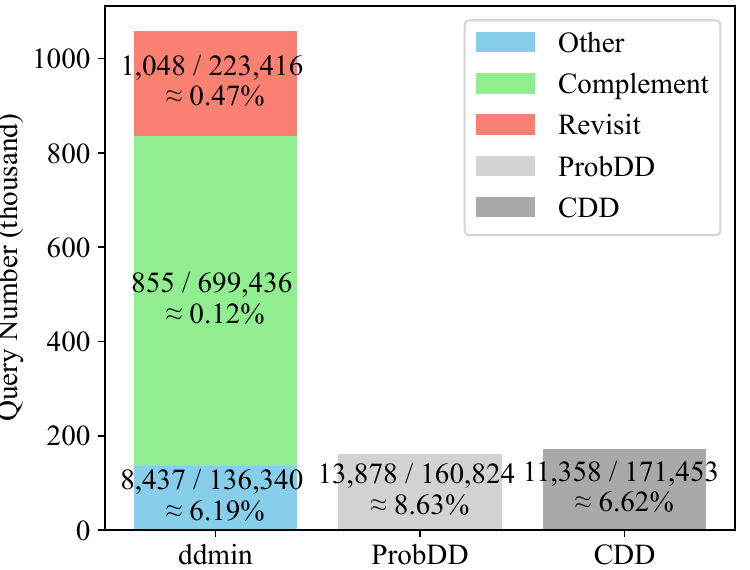}
		\caption{ On \debloatBenchmarkNameShort }
		\label{fig:chisel_stacked_barchart}
	\end{subfigure}
    \hfil
     \begin{subfigure}{0.32\linewidth}
    	\centering
    	\includegraphics[width=\linewidth]{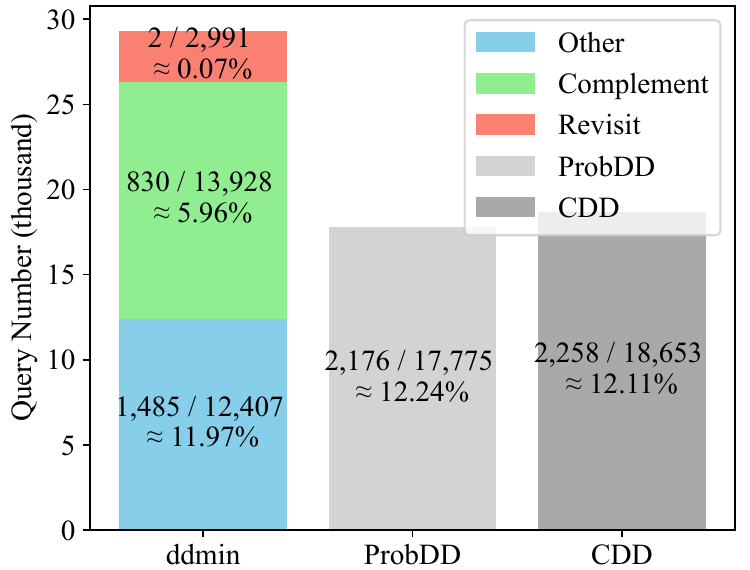}
    	\caption{ On \xmlBenchmarkNameShort }
    	\label{fig:xml_stacked_barchart}
    \end{subfigure}
	\caption{
		Visualization of queries within \ddmin, \probdd and
		\proj. In \ddmin,
		three types of queries are displayed via stacked bars, the height of
		which denotes the query number. Within each bar, the number
		of successful queries, total queries and the
		corresponding success rate are annotated.
%
	}
	\label{fig:stacked_barchart}
\end{figure*}

To further understand why \probdd is more efficient than
\ddmin, we
conduct in-depth statistical analysis on the query number (number of deletion
attempts).
Intuitively, performance bottlenecks lie in those queries with low success
rates, impairing \ddmin's efficiency.
Existing studies \cite{hodovan2016practical, gharachorlu2018avoiding}
also demonstrate the presence of queries with low success rates.
Therefore, to qualitatively and quantitatively identify the exact bottlenecks
impairing \ddmin, we statistically analyze all the queries in \ddmin and
categorize them into three types:
\begin{enumerate}[topsep=0pt, leftmargin=*]
	\item \typeComplement: Queries attempting to remove the complement
	of a
	subset.
	According to \ddmin algorithm, given a
	subset (smaller than half of the list \listinput), it attempts to remove
	either the subset or its complement.
	However, evidence~\cite{hodovan2016practical} shows that keeping a
	small subset and
	removing its complement is not likely to succeed,
	especially on
	structured inputs like programs.

	\item \typeRepeated:
	Queries attempting to remove the previously tried
	subset.
	After removing a subset, \ddmin restarts
	the process
	from the
	first subset, leading to repeated deletion attempts on earlier subsets.
	Although the removal of one subset may allow another
	subset to be removable, such repetitions rarely succeed
	and thus offer limited improvement for the
	reduction~\cite{gharachorlu2018avoiding}.
	\item \typeOther: All other queries.
\end{enumerate}
In addition to categorizing queries in \ddmin into the above types, we also
calculate the success rate of each type, aiming to reveal the
bottlenecks of \ddmin.
\cref{fig:stacked_barchart} illustrates the distribution of queries for all
types within \ddmin, as well as the query number for \probdd across all
three benchmark suites.

On all benchmark suites, the number of successful queries in
\ddmin and \probdd is remarkably similar, especially when contrasted with
the substantial difference in the total number of queries.
Specifically, on \cBenchmarkNameShort, \ddmin achieves
$\cRepeatedNumberSuccessDdmin +
\cComplementNumberSuccessDdmin + \cOtherNumberSuccessDdmin =
\cSuccessNumberDdmin$ successful queries, closely matching the
\cSuccessNumberProbdd successful queries from \probdd. Similarly, on
\debloatBenchmarkNameShort and \xmlBenchmarkNameShort, \ddmin
performs $\debloatRepeatedNumberSuccessDdmin +
\debloatComplementNumberSuccessDdmin +
\debloatOtherNumberSuccessDdmin = \debloatSuccessNumberDdmin$
and $\xmlRepeatedNumberSuccessDdmin +
\xmlComplementNumberSuccessDdmin + \xmlOtherNumberSuccessDdmin
= \xmlSuccessNumberDdmin$ successful queries, respectively, both
closely aligning with the \debloatSuccessNumberProbdd and
\xmlSuccessNumberProbdd successful queries achieved by \probdd.
Besides,
\ddmin always performs
significantly more failed queries, resulting in a larger total
query number and thus a longer execution time, as previously discussed in
\cref{subsec:reproduction}.

On all benchmark suites, a large portion of \ddmin's queries
is categorized as \typeComplement and \typeRepeated; however, they both
have a notably low success rate. For instance, on
\cBenchmarkNameShort, out of a total of \cTotalNumberDdmin queries,
\typeComplement and \typeRepeated account for
\cComplementNumberDdmin (\cComplementPercentageDdmin) and
\cRepeatedNumberDdmin (\cRepeatedPercentageDdmin),
respectively. Within such queries in \typeComplement and \typeRepeated,
merely
\cComplementNumberSuccessDdmin (\cComplementSuccessRateDdmin)
and \cRepeatedNumberSuccessDdmin (\cRepeatedSuccessRateDdmin)
queries succeed, \ie, only a tiny portion of attempts
successfully reduce
elements. These success rates are far less than those of queries within
\typeOther (\cOtherSuccessRateDdmin), as well as those of \probdd
(\cSuccessRateProbdd). On the other benchmark suites, a similar
phenomenon is observed.

Queries within \typeComplement and \typeRepeated categories constitute
a large portion yet prove to be largely inefficient,
wasting a significant amount of time and resources. On the contrary,
those in \typeOther achieve a much higher success rate, on par with that
of \probdd, and
are responsible for most of the successful deletions.
Therefore, we
believe that these two categories, where queries are inefficient, are
the
main bottlenecks
behind \ddmin's low
efficiency. However, these bottlenecks are absent in \probdd,
as it does not consider complements of subsets and
previously tried subsets for deletion.

\subsection{1-Minimality  of \probdd?}
\label{subsec:finding_limitation}
\finding{Improving efficiency by avoiding ineffective
	attempts
	presents a trade-off by not ensuring 1-minimality, while such limitation
	can be mitigated by iteratively running the reduction algorithm until a
	fixpoint is
	reached.}



Although \probdd avoids \typeRepeated queries to enhance efficiency,
some reduction potentials may be missed, as the deletion of a certain
subset may enable a previously tried subset to become removable.
Therefore, a limitation of \probdd lies in that it increases efficiency by
sacrificing 1-minimality. To substantiate this limitation, we examine how
frequently \probdd generates a list that is not 1-minimal, \ie, can be further
reduced by removing a single element. For instance, statistical analysis on
\cBenchmarkNameShort reveals that among
\cBenchmarkProbddCalledNum invocations of \probdd,
\cBenchmarkProbddNotoneminimalNum of them fail to
generate a 1-minimal result, accounting for
\cBenchmarkProbddNotoneminimalRate.
For these failed invocations, an average of
\cBenchmarkProbddNotoneminimalAvgPotential elements (tree nodes) can
be further
removed via single-element deletion.

However, such limitation is not apparent across all
 benchmark suites, as the results from \probdd
 are not
 consistently larger than those from \ddmin. Our further investigation
 reveals that these
 benchmarks are reduced on
 wrapper frameworks
 \picireny and \chisel. Both
 frameworks employ iterative loops to achieve a fixpoint, effectively
 reducing some elements missed in the first iteration.


\section{Implications: a counter-based model}
\label{sec:implication}
Building on the aforementioned demystification of \probdd, we
discover
that probability can be optimized away, and subset size can be pre-computed.
Hence, we propose \projfull (\proj), to reduce the
complexity of both
the theory and
implementation of \probdd, and validate the correctness of our prior
theoretical
proofs.

\begin{algorithm}[h]
    \small
	\DontPrintSemicolon
	\SetKwInput{KwData}{Input}
	\SetKwInput{KwResult}{Output}
	\caption{\protect\AlgCDD{$L, \psi$}}
	\label{alg:counterdd_main}

	\SetKwRepeat{Do}{do}{while}
	\KwData{$L$: a list of element to be reduced.}
	\KwData{$\psi: \searchspace \rightarrow \boolspace$:
		the property to be preserved by $L$.
	}
	\KwData{\pinit:
		the initial probability given by the user.
	}
	\KwResult{the minimized list that still exhibits the property $ \psi $ .}

	$\RoundNumber \gets 0$ \tcp{The round number, initially 0.}
	\label{alg:counterdd_main:begin}
	\Do{$\mysize{}{} > 1$}{
		\tcp{Compute subset size by round number}
		$\mysize{}{} \gets$ \AlgComputeSize(\RoundNumber, \pinit) \\
		\label{alg:counterdd_main:compute_size}
		\tcc{Partition L into subsets with \mysize{}{} elements. If
			it does not divide evenly, leave a smaller remainder as the final
			subset.}
		$\texttt{subsets} \gets$ \AlgPartition(L, \mysize{}{}) \\
		\label{alg:counterdd_main:partition}
		\ForEach{$\texttt{subset} \in \texttt{subsets}$}{
			\label{alg:counterdd_main:inner_loop_begin}
			$\texttt{temp} \gets L \setminus \texttt{subset}$ \\
			\tcp{Remove $subset$ if it is removable}
			\If{$\psi(\texttt{temp})$ is true}{
				$L \gets \texttt{temp}$\;
			}
			\label{alg:counterdd_main:inner_loop_end}
		}

		\tcp{Update the \RoundNumber and move to
		next round.}
		$\RoundNumber \gets \RoundNumber + 1$
		\\
	}
	\Return $L$ \; \label{alg:counterdd_main:end}

	\Fn{\AlgComputeSize{$\RoundNumber, \myprobability{0}{}$}}{
		\KwData{\RoundNumber: the current round number.}
		\KwData{\myprobability{0}{}: the initial probability given by the user.}
		\KwResult{The size of the subset to be used in the current round.}
		\tcp{Calculate the estimated probability of round \RoundNumber}
		\label{alg:computesize:begin}
		$\myprobability{r}{} \gets \myprobability{0}{} \times
		\pincreaseratevalue^{\RoundNumber}$ \\
		\tcp{Calculate corresponding subset size of round \RoundNumber}
		$\mysize{r}{} = \operatorname*{arg\,max}_{\mysize{}{} \in
		\mathbb{N}^{+}} \mysize{}{} \times (1 -
		\myprobability{r}{})^{\mysize{}{}}
		$ \\
		\Return \mysize{r}{} \;
		\label{alg:computesize:end}
	}

\end{algorithm}

\myparagraph{Subset size pre-calculation}
Based on  \cref{lem:s:relation} in  \cref{subsec:finding_size},
the size
for each round can be pre-calculated. Therefore, as shown at
 \cref{alg:computesize:begin} --
 \cref{alg:computesize:end} in  \cref{alg:counterdd_main}, we utilize the
current round \RoundNumber and the initial probability
\pinit to
determine the subset size \mysize{}{}.
The size of the selected subset decreases as the round counter increases.
This is intuitively reasonable since, after a sufficient number of attempts on a large size have been made,
it becomes more advantageous to gradually reduce the subset size for future trials.
Furthermore, this trend aligns well with that of \probdd, in which
probabilities of elements gradually increase, resulting in a smaller subset
size.

\myparagraph{Main workflow}
The simplified \probdd is illustrated in  \cref{alg:counterdd_main}, from
 \cref{alg:counterdd_main:begin} to  \cref{alg:counterdd_main:end}.
Before each round, the \proj pre-calculates the subset
size on
 \cref{alg:counterdd_main:compute_size}
and
then partitions \listinput using this size on
 \cref{alg:counterdd_main:partition}.
Then, similar to \ddmin,
it attempts
to remove each subset on
 \cref{alg:counterdd_main:inner_loop_begin} --
 \cref{alg:counterdd_main:inner_loop_end}.
The subset size continuously decreases until it reaches 1, meaning that
each
element will be individually removed once.

%

\myparagraph{Revisiting the running example}
Returning to \cref{table:running_example_results_probdd}, under the same
conditions, \proj achieves the same results as \probdd but without the need
for probability calculations. This is because both the probability and subset
size \mysize{}{} can be directly determined from the round number
\RoundNumber.

\myparagraph{Evaluation}
As shown in \cref{table:rq1_firsthalf} and \cref{table:rq1_secondhalf}, \proj
outperforms \ddmin \wrt efficiency, with
\allBenchmarkNoTimeoutTimeDecRateProjVsDdmin less time and
\allBenchmarkNoTimeoutQueryDecRateProjVsDdmin fewer queries.
Meanwhile, \proj performs comparably to \probdd \wrt final size, execution
time and
query number,
with a p-value of \allBenchmarkSizeProbddVsProjPvalue,
\allBenchmarkTimeProbddVsProjPvalue and
\allBenchmarkQueryProbddVsProjPvalue, respectively, indicating
insignificance between these
two
algorithms. \proj is expected to perform on par with \probdd since it is
designed to provide further insight and simplify the intricate design of
\probdd, rather than to surpass its capabilities.
Furthermore, its comparable performance to \probdd further validates the
non-necessity of randomness and
our assumption in \cref{lem:s:uniform-prob}.

\myparagraph{Bottleneck and 1-minimality}
Revisiting the bottlenecks presented in \cref{fig:stacked_barchart}, \proj
possesses a query number and success rate close to those of \probdd,
indicating that \proj also overcomes the bottlenecks of \ddmin. Additionally,
similar to \probdd, 1-minimality is absent in \proj, although iterations help
mitigate this issue.

\finding{\proj always
	achieves
	comparable performance to \probdd,
   	which further supports our previous findings,
     including the theoretical simplifications regarding size and probability,
     analysis of randomness, bottlenecks, and 1-minimality.
}

\section{Limitations and threats to validity}
\label{sec:limitation}
In this section, we discuss the limitations of \proj,
and potential factors that may impair the validity of our experimental
results.

\subsection{Limitations}
As discussed in \cref{subsec:finding_limitation},
compared to \ddmin, neither \probdd nor \proj guarantees 1-minimality.
This limitation arises because after successfully removing a subset, \ddmin
restarts the process from the first subset, whereas \probdd and \proj
continue
from the next subset, skipping all the previously tried subsets.
Therefore, although \probdd and \proj complete the reduction process
more quickly,
they may miss certain reduction opportunities and produce larger results
than \ddmin.

However, reduction and debloating tools generally invoke
these reduction algorithms in iterative loops until a fix-point is reached,
gradually refining the results and mitigating limitations, as mentioned in
\cref{subsec:finding_limitation}.
\cref{table:rq1_firsthalf} and \cref{table:rq1_secondhalf} further support this point by showing that with multiple
iterations,
\probdd and \proj achieve significantly higher efficiency compared to
\ddmin,
while still producing results that are comparable to \ddmin w.r.t.
effectiveness.

\subsection{Threats to validity}
For internal validity, the main threat comes from the potential impact from
the assumption, as discussed in \cref{subsection:assumption}. Specifically,
without assuming that the number of elements in \listinput is always
divisible by the current subset size, we could not further refine the
mathematical model of \probdd to achieve a simpler representation.
However, such assumption might impact the actual performance,
potentially negating the benefits of our simplification. To this end, we
conduct extensive empirical experiments, demonstrating that \proj, the
simplified algorithm derived from this assumption, exhibits no significant
difference from \probdd.

For external validity, the threat lies in the generalizability of our findings
across
application scenarios. To mitigate this threat, we perform experiments
on \benchmarkNumTotal benchmarks, including C programs triggering
real-world compiler bugs, \xml inputs crashing \xml processing tools, and
benchmarks from software debloating tasks. These benchmarks have
covered various use scenarios of minimization algorithms.

\section{Related Work}
\label{sec:related_work}
In this section, we discuss related work of test input minimization around
three aspects: effectiveness, efficiency, and the utilization of domain
knowledge.

\myparagraph{Effectiveness}
Test input minimization is an NP-complete problem, in which achieving the
global minimum is usually infeasible. Therefore, existing approaches to
improving effectiveness mainly aim to escape local minima by performing
more exhaustive searches. Since enumerating all
possible subsets is infeasible,
\vulcan~\cite{xu2023pushing} and \creduce~\cite{regehr2012test}
enumerate all combinations of elements
within a small sliding window, and exhaustively attempt to delete each
combination,
resulting in smaller final program sizes.
In contrast, \probdd and \proj do not exhibit clear
actions targeted at breaking through local optima, suggesting they cannot
achieve better effectiveness than \ddmin, as aligned with our evaluation in
\cref{sec:experimental_results}.

\myparagraph{Efficiency}
If parallelism is not considered, the core of boosting efficiency is the
enhanced capability to avoid relatively
inefficient queries.
For example, Hodovan and Kiss~\cite{hodovan2016practical} proposed
disregarding attempts to remove the
complement of subsets, the success rate of which is unacceptably low in
some
scenarios.
Besides,
Gharachorlu and Sumner~\cite{gharachorlu2018avoiding} proposed One
Pass Delta Debugging (OPDD), which continues with
the subset next to the deleted one, rather than starting over from the first
subset. This optimization also avoids some redundant queries in
\ddmin, reducing runtime by 65\%.
As revealed by our analysis, these
two above-mentioned
optimizations are implicitly incorporated within \probdd and \proj, and
thereby contributing to their higher efficiency than \ddmin.

\myparagraph{Utilization of domain knowledge}
There is an inherent trade-off between effectiveness
and efficiency in test input minimization. For the same algorithm, achieving
a better result, \ie, a smaller local optimum, requires more queries to be
spent on trial and error.
However, employing domain knowledge~\cite{regehr2012test,
	kalhauge2019binary, kalhauge2021logical, zhang2024lpr} can still
	improve the overall
performance. For instance, \jreduce is both
more
effective and efficient than \hdd
in reducing \java programs, as it escapes more
local optima by program transformations while simultaneously avoiding
more
inefficient
queries via semantic constraints, leveraging the semantics of \java.
Our analysis on \probdd indicates that the probabilities primarily
function as
counters and do not
utilize or effectively learn the domain knowledge
of an input. Besides, the evaluation on \proj, a simplified algorithm without
utilizing
probability, demonstrates that prioritizing
elements
via such
probabilities does not yield significant benefits,
thus validating our analysis.

\section{Conclusion}
\label{sec:conclusion}

 This paper conducts the first in-depth analysis of \probdd, which is the
 state-of-the-art variant of \ddmin, to further comprehend and demystify
 its superior performance. With theoretical analysis of the probabilistic
 model in \probdd, we reveal
 that probabilities essentially serve as monotonically increasing counters,
 and propose \proj for simplification. Evaluations on \benchmarkNumTotal
 benchmarks from test input minimization
 and software debloating
 confirm that \proj performs on par with \probdd, substantiating our
 theoretical analysis. Furthermore, our examination on query
 success rate
 and randomness uncovers that \probdd's superiority stems from skipping
 inefficient queries.
 Finally, we discuss trade-offs in \ddmin and \probdd, providing insights
 for future research and applications of test input minimization algorithms.

\section*{Acknowledgments}

We thank all the anonymous reviewers in ICSE'25 for their insightful
feedback and comments.
This research is partially supported by
the Natural Sciences and Engineering Research Council of Canada
(NSERC) through the
Discovery Grant, a project under WHJIL,
and CFI-JELF Project \#40736.

\balance
\bibliographystyle{IEEEtran}
\bibliography{reference}
\end{document}